\titlespacing*{\section}{0pt}{*2}{*1}
\titlespacing*{\subsection}{0pt}{*2}{*1}
\newtheorem{theorem}{Theorem}
\newtheorem{lemma}{Lemma}
\theoremstyle{definition}
\newtheorem{condition}{Condition}
\DeclareMathOperator*\argmin{\arg\min}
\def\bX{\mathbf{X}}
\def\bY{\mathbf{Y}}
\def\bmu{\boldsymbol{\mu}}
\begin{document}	
\begin{titlepage}
	\setstretch{1.24}
	\title{Gene regulatory network in single cells based on the Poisson log-normal model}
	\author{Feiyi Xiao$^1$, Junjie Tang$^1$, Huaying Fang$^2$ and Ruibin Xi$^{1,*}$}
	\date{}
	\maketitle
	\thispagestyle{empty}
		
	\footnotetext{
		\hspace{-6mm}$^1$ School of Mathematical Sciences, Peking University, Beijing, China \\
		$^2$ Beijing Advanced Innovation Center for Imaging Theory and Technology,
		Capital Normal University, Beijing, China\\
		$^*$ To whom correspondence should be addressed. Email: ruibinxi@math.pku.edu.cn.
	}

\begin{abstract}
Gene regulatory network inference is crucial for understanding the complex molecular interactions in various genetic and environmental conditions. The rapid development of single-cell RNA sequencing (scRNA-seq) technologies unprecedentedly enables gene regulatory networks inference at the single cell resolution. However, traditional graphical models for continuous data, such as Gaussian graphical models, are inappropriate for network inference of scRNA-seq's count data. Here, we model the scRNA-seq data using the multivariate Poisson log-normal (PLN) distribution and represent the precision matrix of the latent normal distribution as the regulatory network. We propose to first estimate the latent covariance matrix using a moment estimator and then estimate the precision matrix by minimizing the lasso-penalized D-trace loss function. We establish the convergence rate of the covariance matrix estimator and further establish the convergence rates and the sign consistency of the proposed PLNet estimator of the precision matrix in the high dimensional setting. The performance of PLNet is evaluated and compared with available methods using simulation and gene regulatory network analysis of scRNA-seq data.
	 
\bigskip
\noindent\emph{Key words}: Network inference; Graphical model; Precision matrix; Single-cell RNA-Seq; .
	\end{abstract}
\end{titlepage}
	
\section{Introduction}
The rapid development of single cell RNA sequencing (scRNA-seq) technologies has provided tremendous opportunities for understanding transcriptional states and activities of single cells. scRNA-seq can be used to unveil the cellular diversity in various biological conditions, identify new cell types and trace trajectories of cell lineages in development. Especially, scRNA-seq allows uncovering gene regulatory networks at single cell level. Gaussian graphical model (GGM) \citep{Meinshausen2006,Yuan2007} is widely used for gene regulatory network analysis \citep{yin2011sparse,ma2007arabidopsis,wille2004sparse}. GGM assumes that each sample is drawn from a multivariate Gaussian distribution, in which the precision matrix (i.e., the inverse of the covariance matrix) represents the gene regulatory network. In GGM, two nodes are not connected if the corresponding random variables are conditionally independent given all other variables, or equivalently if the corresponding element in the precision matrix is zero. However, gene expressions from scRNA-seq are count data and the Gaussian assumption is inappropriate. In particular, for the recent unique molecular identifier (UMI) based scRNA-seq data \citep{farrell2018single,zheng2017massively}, the expression counts are rather small and contain many zeros (the dropout problem). Transformations like taking logarithms cannot make the Gaussian model a good approximation to the distribution of scRNA-seq expression data and may distort the correlation structure when there are many 0's in scRNA-seq data.


Poisson distribution is a natural choice for modeling count data. Researchers have generalized the univariate Poisson distribution to multivariate distributions for network analysis with count data \citep{Inouye2017}. One such model is the Poisson graphical model \citep{Allen2013}, which assumes that each node has a conditional univariate Poisson distribution, but this multivariate joint Poisson distributions are not flexible enough and can only capture negative dependencies. To circumvent this limitation, \citet{yang2013poisson} proposed extensions of the Poisson graphical model. However, scRNA-seq data are not exact gene expressions in single cells but are measurements (with noises) of the gene expressions. The regulatory network is about the inter-dependency between gene expresssions, but these generalized Poisson graphical models directly impose graph structure on the count data and the technical noises are also involved in the network. In addition, the over-dispersion in scRNA-seq indicates that modeling scRNA-seq counts by Poisson distributions may be inadequate.  Negative binomial distributions are often used to account for the over-dispersion \citep{Robinson2009}, but it is more difficult to generalize negative bionomial distributions to describe the network structure in the multivarate count data. Here, we propose to use the multivariate Poisson log-normal (PLN) distribution for gene regulatory network analysis based on scRNA-seq data. The PLN distribution is a mixture of Poisson and multivariate log-normal distributions \citep{AITCHISON1989}. A random vector $\bY = (Y_1,\cdots,Y_p)^T \in \mathbb{R}^p$ is from a PLN distribution, if conditional on a latent variable $\bX = (X_1,\cdots,X_p)^T \in \mathbb{R}^p$ with $\log(\bX) \sim \mbox{N}(\bmu,\Sigma)$, $\bY$ follows a multivariate Poisson distribution $\prod_{j=1}^p\mbox{Poisson}(X_j)$.

Like negative binomial distributions, PLN distributions have over-dispersion and hence are more suitable for modeling scRNA-seq data than Poisson distributions \citep{Inouye2017}. The major advantage of the PLN model is that, similar to the GGM, the precision matrix $\Theta = \Sigma^{-1}$ of the latent log-normal vector $\bX$ can represent the gene regulatory network. Network recovery of single cells can be achieved by estimating precision matrices of PLN models. Previous experimental researches showed that gene expressions of single cells follow log-normal distributions \citep{bengtsson2005gene}. The PLN model of scRNA-seq data thus has the following explanation. The latent variables $X_1,\cdots,X_p$ are the true expressions of $p$ genes in a single cell. The logarithms of the expressions are jointly normally distributed and the precision matrix captures the gene-gene interactions in single cells. The counts $Y_1,\cdots,Y_p$ are the measurements of the true expressions $X_1,\cdots,X_p$. It is thus reasonable to assume that $Y_1,\cdots,Y_p$ are conditionally independent and their conditional expectations depend on the true expressions. Largely speaking, the log-normal layer of the PLN captures the biological fluctuation of gene expressions and the Poisson layer accounts for the technical and measurement noises. Only the biological fluctuation reflects gene-gene interactions and the regulatory network is the precision matrix of the latent log-normal model.

Based on the sparsity assumption, researchers have proposed many methods for estimating the precision matrix of the GGM and established consistency theories, such as methods by maximizing the penalized log-likelihood \citep{Yuan2007,Friedman2008}, by solving an equivalent regression problem with lasso penalty \citep{Meinshausen2006,Peng2009} or by minimizing a smooth convex loss function (called D-trace loss) with lasso penalty \citep{Zhang2014}. A few algorithms have been developed for estimating the precision matrix of the PLN model in high-dimensional settings. Compared with GGM, the likelihood of the PLN model is more complicated since it involves a multivariate integration and does not have a close form. Maximizing the penalized log-likelihood of the PLN model is more difficult. \cite{Wu2018} proposed to use Monte Carlo to approximate the log-likelihood and estimate the precision matrix by maximizing the penalized approximated log-likelihood of the PLN model, while \cite{Chiquet2019} developed a computational more appealing method based on the variational approximation. However, these methods are all based on approximation of the log-likelihood and the accuracies of these approximations need to be further elaborated. More importantly, no convergence theory has been developed for these precision matrix estimators.

In this article, we propose to first estimate the covariance matrix $\Sigma$ of the latent log-normal variables in the PLN using a moment estimator $\hat{\Sigma}$, and then estimate the precision matrix by minimizing the lasso penalized D-trace loss \citep{Zhang2014}. One advantage of this moment-based approach (called PLNet) is that it avoids computing the log-likelihood of the PLN model. Minimizing the penalized D-trace loss is computationally cost-effective. Thus, this estimator is generally computationally more efficient. Furthermore, we show that, under an irrepresentability condition and a few other mild conditions, the estimator given by PLNet is a consistent estimator of the precision matrix in high dimensional settings. Comprehensive simulation analyses show that PLNet provides more accurate estimates and is computationally more efficient than available methods. We also demonstrate the application of PLNet to scRNA-seq data.

\section{Model}
\label{sec2}

\subsection{The PLN graphical model}
Let $\bY_i = (Y_{i1},\cdots,Y_{ip})^T$ be the observed count data of the $i$th sample and $\bX_i = (X_{i1},\cdots,X_{ip})^T$ be the latent random vector. In scRNA-seq data, $Y_{ij}$ and $X_{ij}$ are the observed expression and the underlying ``true" expressions of the $j$th gene in the $i$th cell, respectively. We assume that conditional on the latent random vector $\bX_i$, $Y_{ij}$'s are independent Poisson random variables with the mean parameters  $S_iX_{ij}$ ($j=1,\cdots,p$), where $S_i$ a known scaling factor. In scRNA-seq data, $S_i$ corresponds to the library size $S_i$ of the $i$th cell. The library size is related to the total sequencing reads and can be estimated by the sum of counts within each cell or by other available methods \citep{Love2014,Lun2016,Vallejos2015}. The latent random vector $\bX_i$ follows a multivariate log-normal random vector with mean $\bmu$ and covariance $\Sigma$. The precision matrix $\Theta=\Sigma^{-1}$ represents the network. In summary, we have the following graphical model for count data, for $1\leq i\leq n$,
\begin{equation}\label{equ1}
\begin{aligned}
\bY_i |\bX_i &\sim \prod_{j=1}^p\mbox{Poisson}(S_iX_{ij}), \\
\log(\bX_i)&\sim \mbox{N}\left( \bmu ,\Theta^{-1} \right).
\end{aligned}
\end{equation}

The above PLN model is a little different from the classical form \citep{AITCHISON1989}, in which $S_i=1$ for all $1\leq i \leq n$ . We assume that the network $\Theta$ is sparse and therefore could use the penalized log-likelihood to estimate $\Theta$. However, the likelihood function in the PLN model involves a $p$-dimensional integration and is difficult to compute, easpecially when $p$ is large. \cite{Chiquet2019} developed a variational algorithm called Variational inference for PLN model (VPLN) to maximize the penalized log-likelihood. Although the variational method is computationally more feasible than directly maximizing the penalized log-likelihood, the estimator's theoretical properties are difficult to obtain. We instead develop an estimator using the moment method that is computationally efficient and has good theoretical properties.

\subsection{A moment based estimator}
Based on the moment method and the D-trace method, we propose a two-step method called PLNet to estimate the sparse precision matrix $\Theta$. We first use the moment method to estimate $\Sigma$ with a semi-positive definite estimator $\hat{\Sigma}$. Then, we apply the D-trace method \citep{Zhang2014} to the covariance estimator $\hat{\Sigma}$ to estimate the sparse precision matrix $\Theta$. We show that the derived estimator $\hat{\Theta}$ is a consistent estimator of $\Theta$ even when the dimensionality diverges to infinite with the sample size going to infinity.

Let $\bmu=\left[ \mu_i \right]_{1\leq i\leq p}$ be the mean vector, $\Sigma=\left[ \sigma_{ij} \right]_{1\leq i,j\leq p}$ be the covariance matrix and $\alpha_i=\mu_i+\sigma_{ii}/2$ for $ 1 \leq i \leq p$. From the first two moments of the PLN distribution, we have
\begin{equation}\label{equ2}
\begin{aligned}
E\left(Y_{ij}/S_i\right)&=\alpha_j,\\
E\left(\left(Y_{ij}^2-Y_{ij}\right)/S_i^2\right)&=\alpha_j^2{\rm exp}\left( \sigma_{jj}\right), \\
E\left(Y_{ij}Y_{ik}/S_i^2 \right)&=\alpha_j\alpha_{k}{\rm exp}\left( \sigma_{jk}\right),
\end{aligned}
\end{equation}
where $1 \leq i \leq n$ and $1\leq j \neq k \leq p$. Let $\tilde{\alpha}_j=n^{-1} \sum_{i=1}^n Y_{ij}/S_i$  for $1\leq j \leq p$. Then, a candidate moment estimator $\tilde{\Sigma}=\left[ \tilde{\sigma}_{ij}\right] _{1\le i,j\le p}$ for the covariance matrix is
\begin{equation}\label{equ3}
\tilde{\sigma}_{jk} = \begin{cases}
\log⁡\left( n^{-1} \sum_{i=1}^{n}\left[ \left\lbrace Y_{ij}\left( Y_{ij}-1\right) \right\rbrace / S_i^2 \right] \right) - 2\log⁡\left( \tilde{\alpha}_j\right),\ &{\rm for}\  1\leq j=k \leq p, \\
\log⁡\left[ n^{-1} \sum_{i=1}^{n}\left\lbrace \left( Y_{ij}Y_{ik} \right)/S_i^2 \right\rbrace\right]   - \left\lbrace  \log⁡\left( \tilde{\alpha}_j\right)+\log⁡\left( \tilde{\alpha}_k\right)\right\rbrace ,\ &{\rm for}\  1\leq j\neq k \leq p.
\end{cases}
\end{equation}

The above moment estimator $\tilde{\Sigma}$ maybe not semi-positive definite. However, the D-trace method requires the input covariance matrix estimator to be semi-positive definite to guarantee the convexity of the loss function. To ensure semi-positive definiteness, we project $\tilde{\Sigma}$ to the space of semi-positive definite matrices and identify $\check{\Sigma}$ that is closest to $\tilde{\Sigma}$ in the space, i.e.,
\begin{equation}
\check{\Sigma}=\argmin_{A\succeq 0}\left\|A-\tilde{\Sigma}\right\|_{\infty},
\end{equation}
where $A\succeq 0$ means $A$ is a semi-positive definite matrix, and $\left\|A\right\|_{\infty}={\rm max}_{i,j}\left| A_{ij}\right| $ is the element-wise $l_\infty$-norm of the matrix $A$. The optimization problem for $\Check{\Sigma}$ can be solved by a splitting conic solver \citep{Fu2020}. Using $\Check{\Sigma}$ in the D-trace loss
\begin{equation}\label{ea}
\check{\Theta}=\argmin_{\Theta\succeq 0}\frac{1}{2}{\rm tr}\left( \check{\Sigma}\Theta^2\right) -{\rm tr}\left( \Theta \right) +\lambda_n\left\|\Theta\right\|_{1,\text{off}},
\end{equation}
we renders a consistent estimator $\check{\Theta}$ of the precision matrix. However, we find that minimizing the penalized D-trace loss with this covariance estimator $\check{\Sigma}$ can be computationally expensive in many scenarios. Therefore, we propose to use the following estimator to estimate $\Sigma$,
\begin{equation}
\hat{\Sigma}=\check{\Sigma}+\left\|\check{\Sigma}-\tilde{\Sigma}\right\|_{\infty}I_p,
\end{equation}
where $I_p$ is the $p\times p$ identity matrix. With the covariance matrix estimator $\hat{\Sigma}$, we apply the D-trace method to estimate the precision matrix,
\begin{equation}\label{eaa}
\hat{\Theta}=\argmin_{\Theta\succeq 0}\frac{1}{2}{\rm tr}\left( \hat{\Sigma}\Theta^2\right) -{\rm tr}\left( \Theta \right) +\lambda_n\left\|\Theta\right\|_{1,\text{off}},
\end{equation}
where $\left\|A\right\|_{1,\text{off}}=\sum_{j\neq k}\left| A_{jk}\right|$ and ${\rm tr}\left( A\right)$ is the trace of the matrix A. We show that plugging-in $\hat{\Sigma}$ to the penalized D-trace loss also can give a consistent estimator $\hat{\Theta}$ of $\Theta$. Numerical analysis (see Table \ref{3}) shows that, compared to $\check{\Sigma}$, using $\hat{\Sigma}$ can significantly accelerate the optimization of the penalized D-trace loss.

The above optimization problem (\ref{eaa}) can be solved by an alternating direction method of multipliers \citep{Zhang2014}. In this paper, we use a more efficient algorithm developed in \cite{Wang2020} to calculate $\hat{\Theta}$. The tuning parameter $\lambda_n$ is selected by minimizing the following approximate Bayesian information criterion (BIC) \citep{Zhao2014},
$$\left\| {\frac{1}{2}(\hat{\Theta}\hat{\Sigma} + \hat{\Sigma} \hat{\Theta}) - I_p}\right\|_{\text{F}} + \left( {\|\hat{\Theta}\|}_{0}\log n\right) /n,$$
where $\| A\|_{\text{F}} $ is the Frobenius norm and $\| A\|_0 $ is the number of nonzero elements for matrix A.

\section{Theoretical properties}
\subsection{Notation}
We establish the theoretical property in the high dimensional setting. We first prove that the covariance matrix estimator $\hat{\Sigma}$ is a consistent estimator for $\Sigma$. The convergence rate for $\hat{\Sigma}$ is similar to that of the sample covariance matrix for random variables with polynomial tail probabilities. Then, under the same irrepresentability condition in \cite{Zhang2014}, we derive the edge recovery property and consistency for the PLNet estimator $\hat{\Theta}$. Meanwhile, we claim that $\check{\Sigma}$ and $\check{\Theta}$ (see equation \ref{ea}) has the same properties as $\hat{\Sigma}$ and $\hat{\Theta}$. All the detail proofs are shown in Supplementary Materials. 

Let $G=\{(i,j)|\Theta_{ij}\neq0\}$ be positions of non-zero elements in $\Theta$, $G^c$ be the complement set of $G$, $d$ is the maximum node degree in $\Theta$ and $s$ is the total number of nonzero elements of $\Theta$, $\theta_{\text{min}}=\min_{\left( i,j\right) \in S}\left|\Theta_{ij}\right|$ is the minimal absolute value of all nonzero elements of $\Theta$. For a vector $a=\left( a_1,\ldots,a_n\right) $, let $\|a\|_1=\sum_{i=1}^n \left| a_i\right| $ and $\|a\|_2=\left(\sum_{i=1}^n a_i^2\right)^{1/2}$ be the $l_1$- and $l_2$-norm of $a$. For a matrix $A$, let $\|A\|_{\infty}={\rm max}_{i,j}\left| A_{ij}\right| $ be the element-wise $l_\infty$-norm, $\|A\|_1=\sum_{i,j}\left| A_{ij}\right| $ be the $l_1$-norm, $\|A\|_{1,\infty}={\rm max}_i(\sum_j\left| A_{ij}\right| )$ be the $l_{1,\infty}$-norm, $\|A\|_F=\left( \sum_{i,j}\left| A_{ij}\right|^2\right) ^{1/2}$ be the Frobenius norm and $\|A\|_2={\rm max}_{\|v\|_2=1}\|Av\|_2$ be the operator norm. Further denote $\|A\|_{1,\text{off}}=\sum_{i\neq j}\left| A_{ij}\right| $. Let $\lambda_{\text{max}}(A)$ and $\lambda_{\text{min}}(A)$ be the largest and smallest eigenvalues of a symmetric matrix $A$. For any subset T of $\{\left( i,j\right) | i,j= 1,...,p \}$, let $\text{vec}\left( A\right) _T$ be the subvector of $\text{vec}(A)$ indexed by T. Let $\otimes$ be the Kronecker product, we define $\Gamma=\Gamma(\Sigma)=\left(\Sigma\otimes I+I\otimes\Sigma\right)/2,\ \hat{\Gamma}=\Gamma\left(\hat{\Sigma}\right)$. Let $(i,j), (k,l)\in \left\lbrace (i,j)|i,j =1,...,p\right\rbrace $. For any matrix $A_{p\times p}$, the row $(i-1)p+j$ and column $(k-1)p+l$ of the matrix $\Gamma\left( A\right)$ is $\Gamma\left( A\right)_{(i,j),(k,l)}=\left( A_{ik}h_{jl}+A_{jl}h_{ik}\right)/2 $, where $h_{ij}=1$ if $i=j$ and $h_{ij}=0$ if $i\neq j$. For two subset $T_1$ and $T_2$ of $\{\left( i,j\right) | i,j= 1,...,p \}$, we define $\Gamma\left( A\right)_{T_1,T_2}$ be the submatrix of $\Gamma\left( A\right)$ whose rows and columns indexed by $T_1$ and $T_2$, respectively. Other notations are as follows, $\gamma=1-\max_{(i,j)\in G^c}\left\|\Gamma_{(i,j),G}(\Gamma_{G,G})^{-1}\right\|_1,$
$k_{\Gamma}=\left\|\Gamma_{G,G}^{-1}\right\|_{1,\infty},k_{\Sigma}=\left\|\Sigma\right\|_{1,\infty}.$

\subsection{Irrepresentability condition and rate of convergence}
We first present the necessary irrepresentability condition for establishing the rate of convergence for the estimator in PLNet. This irrepresentability condition is from the D-trace method in \cite{Zhang2014} and is equvalent to $\gamma>0$ .
\begin{condition}[Irrepresentability condition]\label{con}
	$\max_{(i,j)\in G^c}\left\|\Gamma_{(i,j),G}(\Gamma_{G,G})^{-1}\right\|_1<1$.~
\end{condition}
\noindent We also need a boundedness condition for the true parameters in the PLN model (\ref{equ1}).
\begin{condition}[Boundedness condition]\label{conC}
	$\max_{1\leq i\leq n,1\leq j,k\leq p}\left\lbrace S_i, \left| \mu_j\right| , \left| \sigma_{jk}\right|  \right\rbrace  \leq C$ {\rm for some positive constant} $C>0$.~
\end{condition}

\noindent From the boundedness condition \ref{conC}, we establish the convergence rate for the covariance matrix estimator $\hat{\Sigma}$ and $\check{\Sigma}$.

\begin{theorem}[Rate of convergence for the covariance matrix estimator]\label{thm:rate}
	Under the boundedness condition \ref{conC}, for any positive integer $m$ and $0<\epsilon<6$, there exist constants $C_1$ and $C_2$ depending only on $m$, such that $pr\left(\left\|\hat{\Sigma}-\Sigma\right\|_{\infty}>\epsilon\right)<p^2/\left( C_1 n^m\epsilon^{2m}\right) $ and $pr\left(\left\|\check{\Sigma}-\Sigma\right\|_{\infty}>\epsilon\right)<p^2/\left( C_2 n^m\epsilon^{2m}\right) $.
\end{theorem}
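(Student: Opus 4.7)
The plan is to reduce both bounds to the corresponding bound for the raw moment estimator $\tilde{\Sigma}$ from \eqref{equ3}, and then to establish a pointwise bound on $|\tilde{\sigma}_{jk}-\sigma_{jk}|$ through a polynomial-tail Rosenthal-type inequality followed by a union bound over the $p^2$ entries. For the reduction step, observe that the true covariance $\Sigma$ is feasible for the projection problem defining $\check{\Sigma}$, so optimality gives $\|\check{\Sigma}-\tilde{\Sigma}\|_\infty \leq \|\Sigma-\tilde{\Sigma}\|_\infty$; by the triangle inequality this yields $\|\check{\Sigma}-\Sigma\|_\infty \leq 2\|\tilde{\Sigma}-\Sigma\|_\infty$, and then $\|\hat{\Sigma}-\Sigma\|_\infty \leq \|\check{\Sigma}-\Sigma\|_\infty + \|\check{\Sigma}-\tilde{\Sigma}\|_\infty \leq 3\|\tilde{\Sigma}-\Sigma\|_\infty$. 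Hence it suffices to bound $pr(\|\tilde{\Sigma}-\Sigma\|_\infty > \epsilon/3)$ by $p^2/(Cn^m\epsilon^{2m})$.

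Next, I would analyze each coordinate of $\tilde{\Sigma}-\Sigma$. Write $\tilde{\sigma}_{jk}$ as the difference of logarithms of three sample averages: $\bar M_{jk} = n^{-1}\sum_i Y_{ij}Y_{ik}/S_i^2$ (or $Y_{ij}(Y_{ij}-1)/S_i^2$ for $j=k$), $\tilde\alpha_j$, and $\tilde\alpha_k$. Conditional on the latent $X_{ij}$, the variable $Y_{ij}$ is Poisson with mean $S_i X_{ij}$, so each $E[(Y_{ij}/S_i)^{2m}]$ and $E[(Y_{ij}Y_{ik}/S_i^2)^{2m}]$ is a finite polynomial in $X_{ij}, X_{ik}$; taking expectations over the log-normal $\bX_i$ and using Condition~\ref{conC} shows these unconditional $2m$-th moments are bounded uniformly by a constant $M_m$ depending only on $m$ and $C$. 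A Rosenthal (or Marcinkiewicz–Zygmund) inequality applied to the centered i.i.d.\ summands then gives $E|\bar M_{jk}-E\bar M_{jk}|^{2m} \leq K_m n^{-m}$ and $E|\tilde\alpha_j-\alpha_j|^{2m}\leq K_m n^{-m}$, after which Markov's inequality delivers polynomial tails of the form $pr(|\bar M_{jk}-E\bar M_{jk}|>\delta)\leq K_m/(n^m\delta^{2m})$.

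To transfer these concentration bounds to the logarithmic scale, I would use that by Condition~\ref{conC} the population quantities $\alpha_j = \exp(\mu_j+\sigma_{jj}/2)$ and $\alpha_j\alpha_k\exp(\sigma_{jk})$ are bounded below by $e^{-3C}$. Consequently the event $|\log\bar M_{jk}-\log E\bar M_{jk}|>\epsilon$ implies $|\bar M_{jk}-E\bar M_{jk}|>E\bar M_{jk}(1-e^{-\epsilon})$, and the restriction $\epsilon<6$ ensures $1-e^{-\epsilon}$ is bounded below by a constant multiple of $\epsilon$. Combining with the same argument for $\tilde\alpha_j,\tilde\alpha_k$ shows that $\{|\tilde\sigma_{jk}-\sigma_{jk}|>\epsilon/3\}$ is contained in a union of three events, each with probability at most $K_m'/(n^m\epsilon^{2m})$. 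Taking a union bound over the $p^2$ pairs $(j,k)$ gives the claimed $pr(\|\tilde{\Sigma}-\Sigma\|_\infty>\epsilon/3)\leq p^2/(C_0 n^m\epsilon^{2m})$, and combined with the first-paragraph reduction this yields the two stated bounds with new constants $C_1,C_2$.

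The main obstacle I expect is the careful bookkeeping around the log transformation: the Rosenthal constants grow with $m$, and one must ensure that the lower bound on the population moments is sufficient, uniformly in $(j,k)$, so that the event $\{|\bar M_{jk}-E\bar M_{jk}|>cE\bar M_{jk}\cdot\epsilon\}$ can be absorbed into a bound that is still polynomial in $\epsilon$ rather than degrading for moderately large $\epsilon<6$. Verifying the uniform $2m$-th moment bound on $Y_{ij}Y_{ik}/S_i^2$ by iterated conditioning (Poisson moments given $\bX_i$, then log-normal moments) is technical but routine once the boundedness of $S_i$, $\mu_j$, and $\sigma_{jk}$ is used; this step is where Condition~\ref{conC} enters the argument essentially.
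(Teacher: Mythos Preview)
Your proposal is correct and follows essentially the same route as the paper: the same reduction $\|\hat{\Sigma}-\Sigma\|_\infty\le 3\|\tilde{\Sigma}-\Sigma\|_\infty$ and $\|\check{\Sigma}-\Sigma\|_\infty\le 2\|\tilde{\Sigma}-\Sigma\|_\infty$ via optimality of the projection, the same Poisson-then-lognormal moment bound to control $E[(Y_{ij}Y_{ik}/S_i^2)^{2m}]$, the same Rosenthal-plus-Markov argument for polynomial tails of the sample averages, and then a transfer to the log scale followed by a union bound over $p^2$ entries. The only cosmetic difference is that the paper linearizes the logarithm via the mean value theorem on the high-probability event $\{\min(\tilde\alpha_j,\tilde u_j,\tilde w_{jk})>e^{-3C}/2\}$, whereas you use the equivalent inequality $|\log x-\log y|>\epsilon\Rightarrow |x-y|>y(1-e^{-\epsilon})$; both yield the same polynomial tail and the same origin for the restriction $\epsilon<6$.
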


\noindent Then, plugging in $\hat{\Sigma}$ to the lasso penalized D-trace loss, we get a consistent estimator $\hat{\Theta}$ that converges to $\Theta$ in several matrix norms.

\noindent
\begin{theorem}[Rate of convergence]\label{thm:recovery}
	Under the irrepresentability condition \ref{con} and the boundedness condition \ref{conC}, for any positive integer $m$, there exists a constant $C_1$ that only depends on $m$, for some $\eta >2$, choosing
	\begin{eqnarray*}
		n>&C_1^{-1/m}p^{\eta/m}{\rm max}\Bigg[12dk_{\Gamma},~12\gamma^{-1}(k_{\Sigma}k_{\Gamma}^2+k_{\Gamma}),~\left\lbrace 12\gamma^{-1}\left(k_{\Sigma}k_{\Gamma}^3+k_{\Gamma}^2\right)+5dk_{\Gamma}^2\right\rbrace \theta_{\min}^{-1},\\
		&{\rm min}\left\lbrace s^{1/2},d+1\right\rbrace \left\lbrace 12\gamma^{-1}\left(k_{\Sigma}k_{\Gamma}^3+k_{\Gamma}^2\right)+5dk_{\Gamma}^2\right\rbrace \lambda_{\min}^{-1}(\Theta),1/5\Bigg] ^2,
	\end{eqnarray*}
	and $$\lambda=12\gamma^{-1}\left(k_{\Sigma}k_{\Gamma}^2+k_{\Gamma}\right)C_1^{-1/(2m)}p^{\eta/(2m)}n^{-1/2}$$
	then with probability $1-p^{2-\eta}$,
	\begin{equation*}
	\begin{aligned}
	&\left\|\hat{\Theta}-\Theta\right\|_{\infty}\leq\left(12\gamma^{-1}\left(k_{\Sigma}k_{\Gamma}^3+k_{\Gamma}^2\right)+5dk_{\Gamma}^2\right)C_1^{-1/(2m)}p^{\eta/(2m)}n^{-1/2},~\\
	&\left\|\hat{\Theta}-\Theta\right\|_F\leq s^{1/2}\left(12\gamma^{-1}\left(k_{\Sigma}k_{\Gamma}^3+k_{\Gamma}^2\right)+5dk_{\Gamma}^2\right)C_1^{-1/(2m)}p^{\eta/(2m)}n^{-1/2},~\\
	&\left\|\hat{\Theta}-\Theta\right\|_2\leq {\rm min}\left\{s^{1/2},d+1\right\}\left(12\gamma^{-1}\left(k_{\Sigma}k_{\Gamma}^3+k_{\Gamma}^2\right)+5dk_{\Gamma}^2\right)C_1^{-1/(2m)}p^{\eta/(2m)}n^{-1/2},~		
	\end{aligned}
	\end{equation*}
\end{theorem}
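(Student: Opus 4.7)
The plan is to combine Theorem \ref{thm:rate} (which controls $\|\hat{\Sigma}-\Sigma\|_\infty$) with a primal--dual witness argument for the lasso-penalized D-trace loss, transporting the analysis of \cite{Zhang2014} from a sample covariance to the moment-based estimator $\hat{\Sigma}$. The key observation is that the D-trace consistency analysis depends on the input covariance only through its element-wise deviation from $\Sigma$, so Theorem \ref{thm:rate} supplies everything that template needs. The overall flow is: (i) convert the $\ell_\infty$ concentration of $\hat{\Sigma}$ into perturbation bounds for $\hat{\Gamma}$ and its restricted inverse; (ii) vectorize the KKT conditions for \eqref{eaa} and construct a primal--dual witness supported on $G$; (iii) verify strict dual feasibility via the irrepresentability condition \ref{con}; (iv) translate the resulting $\ell_\infty$ bound on $\hat{\Theta}-\Theta$ into Frobenius and operator-norm bounds.

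For step (i), setting $\epsilon_n=C_1^{-1/(2m)}p^{\eta/(2m)}n^{-1/2}$ in Theorem \ref{thm:rate} gives $\|\hat{\Sigma}-\Sigma\|_\infty\le\epsilon_n$ on an event of probability at least $1-p^{2-\eta}$. Since every entry of $\Gamma(A)$ equals an entry of $A$ (up to a factor of $1/2$), this transfers to $\|\hat{\Gamma}-\Gamma\|_\infty\le\epsilon_n$, and the row sparsity of $G$ yields $\|(\hat{\Gamma}-\Gamma)_{G,G}\|_{1,\infty}\le Cd\,\epsilon_n$. A Neumann-series argument then shows $\|\hat{\Gamma}_{G,G}^{-1}\|_{1,\infty}\le 2k_\Gamma$ and that the off-support coupling $\hat{\Gamma}_{G^c,G}\hat{\Gamma}_{G,G}^{-1}$ still satisfies the irrepresentability inequality with slack $\gamma/2$, provided $12dk_\Gamma\epsilon_n\le 1$. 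This is precisely the first term in the sample-size requirement.

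For steps (ii)--(iii), the KKT conditions for \eqref{eaa} read $\tfrac12(\hat{\Sigma}\hat{\Theta}+\hat{\Theta}\hat{\Sigma})-I_p+\lambda_n\hat{Z}=0$ with $\hat{Z}\in\partial\|\hat{\Theta}\|_{1,\text{off}}$. Writing $\Delta=\hat{\Theta}-\Theta$ and using $\Sigma\Theta+\Theta\Sigma=2I_p$, vectorization gives $\hat{\Gamma}\,\mathrm{vec}(\Delta)=-(\hat{\Gamma}-\Gamma)\mathrm{vec}(\Theta)-\lambda_n\mathrm{vec}(\hat{Z})$. I would construct a candidate $\bar{\Theta}$ supported on $G$ by solving the restricted equation on $G$ and then check that the induced off-support dual variable satisfies $\|\hat{Z}_{G^c}\|_\infty<1$ strictly. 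This check uses the slack $\gamma$ and requires $\|(\hat{\Gamma}-\Gamma)\mathrm{vec}(\Theta)\|_\infty\le k_\Sigma\epsilon_n$ to be dominated by $\gamma\lambda_n/(4k_\Gamma)$; the choice $\lambda_n=12\gamma^{-1}(k_\Sigma k_\Gamma^2+k_\Gamma)\epsilon_n$ is tuned precisely to make this happen, forcing $\hat{\Theta}=\bar{\Theta}$ and hence $\supp(\hat{\Theta})\subseteq G$. Bounding $\|\Delta\|_\infty$ on $G$ then reduces to applying $\hat{\Gamma}_{G,G}^{-1}$ to the restricted KKT equation and controlling the residual, the $\lambda_n$ term, and a $dk_\Gamma^2\|\Delta\|_\infty^2$ second-order correction, which together give the stated rate. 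The Frobenius and operator-norm bounds follow from $\|\Delta\|_F\le\sqrt{s}\|\Delta\|_\infty$ (since $|G|=s$) and $\|\Delta\|_2\le\min\{\|\Delta\|_F,\|\Delta\|_{1,\infty}\}\le\min\{\sqrt{s},d+1\}\|\Delta\|_\infty$, where the $d+1$ arises because each row of $\Delta$ has at most $d+1$ nonzeros once the diagonal is included.

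The main obstacle is the perturbation bookkeeping required for strict dual feasibility in step (iii): one has to carefully track how $\gamma$, $k_\Sigma$, $k_\Gamma$, and $d$ propagate through the bounds on $\hat{\Gamma}_{G,G}^{-1}$ and on the noise vector $(\hat{\Gamma}-\Gamma)\mathrm{vec}(\Theta)$, and to ensure that each piece is absorbed into the appropriate fraction of $\gamma\lambda_n$. The explicit sample-size requirement is engineered so that the second-order $dk_\Gamma^2\|\Delta\|_\infty^2$ term does not swamp the linear analysis and so that the $\theta_{\min}$-type threshold needed for support recovery is met. Beyond this, the argument is mechanical once Theorem \ref{thm:rate} is in hand; the moment-based nature of $\hat{\Sigma}$ enters the proof only through its $\ell_\infty$ deviation rate.
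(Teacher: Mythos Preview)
Your plan is correct in spirit and matches the paper's route: first use Theorem~\ref{thm:rate} to control $\|\hat{\Sigma}-\Sigma\|_\infty$ on a high-probability event, then feed that deviation into the D-trace analysis of \cite{Zhang2014}. The paper is simply more economical than your sketch: rather than re-running the primal--dual witness from scratch, it quotes Lemma~A1(b),(c) of \cite{Zhang2014} (restated here as Lemma~\ref{dtrace}) as a black box, which already packages the support-recovery and $\ell_\infty$ bound once $\|\hat{\Sigma}-\Sigma\|_\infty$ is small enough relative to $1/(12dk_\Gamma)$ and $\gamma\lambda/\{12(k_\Sigma k_\Gamma^2+k_\Gamma)\}$. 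Your Neumann-series and dual-feasibility bookkeeping is exactly what lives inside that lemma, so you are re-proving it rather than citing it.

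Two small points to tighten. First, there is no ``$dk_\Gamma^2\|\Delta\|_\infty^2$ second-order correction'': the D-trace loss is \emph{quadratic} in $\Theta$, so the KKT system $\hat{\Gamma}\,\mathrm{vec}(\Delta)=-(\hat{\Gamma}-\Gamma)\mathrm{vec}(\Theta)-\lambda\,\mathrm{vec}(\hat{Z})$ is exactly linear in $\Delta$. The $5dk_\Gamma^2$ term in the final bound comes from the Neumann perturbation of $\Gamma_{G,G}^{-1}$ to $\hat{\Gamma}_{G,G}^{-1}$, not from a Taylor remainder; you may be pattern-matching to the glasso analysis, where the log-determinant does produce a genuine remainder. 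Second, you do not mention why the positive-semidefinite constraint in \eqref{eaa} is inactive. The paper handles this by first analyzing the \emph{unconstrained} symmetric minimizer $\breve{\Theta}$, then using the operator-norm bound $\|\breve{\Theta}-\Theta\|_2<\lambda_{\min}(\Theta)$ (this is precisely where the $\lambda_{\min}^{-1}(\Theta)$ term in the sample-size requirement enters) to conclude $\breve{\Theta}\succ 0$ and hence $\hat{\Theta}=\breve{\Theta}$. You should make that step explicit.
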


\noindent Meanwhile, with a high probability, the sign of the sparse precision matrix $\Theta$ can be recovered by $\hat{\Theta}$ and we have the following theorem about the sign consistency of $\hat{\Theta}$.

\begin{theorem}[Sign consistency]\label{thm:sign}
	Under all the conditions in Theorem \ref{thm:recovery}, for some $\eta >2$, choosing the same $n$ and $\lambda$ in Theorem \ref{thm:recovery}, then with probability $1-p^{2-\eta}$,  $\hat{\Theta}$ recovers all zeros and nonzeros in $\Theta$.
\end{theorem}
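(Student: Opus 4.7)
The plan is to decompose the sign-consistency claim into two parts, both of which are already essentially set up by the proof of Theorem \ref{thm:recovery}: (i) $\hat{\Theta}_{ij}=0$ for every $(i,j)\in G^c$, and (ii) $\operatorname{sgn}(\hat{\Theta}_{ij})=\operatorname{sgn}(\Theta_{ij})$ for every $(i,j)\in G$. Part (ii) will fall out of the elementwise bound and the explicit sample-size requirement, while part (i) will come from a primal-dual witness (PDW) argument powered by the irrepresentability condition.

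For part (ii), I would observe that the sample-size condition in Theorem \ref{thm:recovery} includes the term $\left\{12\gamma^{-1}(k_\Sigma k_\Gamma^3+k_\Gamma^2)+5dk_\Gamma^2\right\}\theta_{\min}^{-1}$ inside the max, so $n$ is in particular large enough to force
\[
\bigl(12\gamma^{-1}(k_\Sigma k_\Gamma^3+k_\Gamma^2)+5dk_\Gamma^2\bigr)\,C_1^{-1/(2m)}p^{\eta/(2m)}n^{-1/2}\;<\;\theta_{\min}.
\]
Combined with the $\ell_\infty$-bound from Theorem \ref{thm:recovery}, this yields $\|\hat{\Theta}-\Theta\|_\infty<\theta_{\min}$ on the event of probability $1-p^{2-\eta}$. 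Then for any $(i,j)\in G$, $|\hat{\Theta}_{ij}-\Theta_{ij}|<|\Theta_{ij}|$, so $\hat{\Theta}_{ij}$ is nonzero and inherits the sign of $\Theta_{ij}$.

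For part (i), I would invoke the standard PDW construction used in \citet{Zhang2014}. Define the oracle estimator $\bar{\Theta}$ by minimizing the penalized D-trace loss subject to $\bar{\Theta}_{G^c}=\mathbf 0$, write down the KKT stationarity system for $\bar{\Theta}$ together with a candidate subgradient $\bar Z\in\partial\|\bar{\Theta}\|_{1,\text{off}}$, solve the $G$-block of the system for $\bar Z_{G^c}$ using $\Gamma_{G,G}^{-1}$, and check the strict dual feasibility $\|\bar Z_{G^c}\|_\infty<1$. The irrepresentability condition \ref{con} contributes a baseline slack of $\gamma$; the remaining error terms involve $\lambda$, $\|\hat{\Sigma}-\Sigma\|_\infty$, and $\|\bar{\Theta}_G-\Theta_G\|_\infty$ acting through $\Gamma(\hat{\Sigma})-\Gamma(\Sigma)$. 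Under the prescribed $n$ and $\lambda$, and on the high-probability event of Theorem \ref{thm:rate}, these error terms are all strictly below $\gamma$, so strict feasibility holds, which by convexity and uniqueness of the D-trace solution forces $\hat{\Theta}=\bar{\Theta}$ and hence $\hat{\Theta}_{G^c}=\mathbf 0$.

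The main obstacle is the book-keeping in the PDW step: one has to simultaneously control $\|\hat{\Sigma}-\Sigma\|_\infty$, the oracle error $\|\bar{\Theta}_G-\Theta_G\|_\infty$, and the cross terms produced by expanding $\Gamma(\hat{\Sigma})\bar{\Theta}-\Gamma(\Sigma)\Theta$ on the $G$-block, and show that their combined $\ell_\infty$-magnitude, after multiplication by $k_\Gamma=\|\Gamma_{G,G}^{-1}\|_{1,\infty}$ and division by $\lambda$, stays below $\gamma$. Essentially all of these intermediate bounds are already produced along the way in the proof of Theorem \ref{thm:recovery}, so the sign-consistency argument can largely cite them; the only genuinely new step is the one-line $\theta_{\min}$ comparison above.
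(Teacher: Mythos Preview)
Your proposal is correct and matches the paper's argument in substance. The paper proves Theorems~\ref{thm:recovery} and~\ref{thm:sign} together: it introduces the auxiliary unconstrained symmetric minimizer $\breve{\Theta}$, invokes Lemma~A1 of \citet{Zhang2014} (restated as Lemma~\ref{dtrace}) as a black box to obtain both $\text{vec}(\breve{\Theta})_{G^c}=0$ and the $\ell_\infty$-bound, then derives $\|\breve{\Theta}-\Theta\|_\infty<\theta_{\min}$ exactly as you do, and finally checks $\lambda_{\min}(\breve{\Theta})>0$ via the operator-norm bound to conclude $\hat{\Theta}=\breve{\Theta}$. Your part~(i) unpacks what is inside that cited lemma (the PDW construction), whereas the paper simply cites it; your part~(ii) is identical to the paper's. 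The only point you leave implicit is the passage from the unconstrained/oracle minimizer to the PSD-constrained $\hat{\Theta}$, which requires the separate positivity check $\|\breve{\Theta}-\Theta\|_2<\lambda_{\min}(\Theta)$; but since you explicitly defer to the proof of Theorem~\ref{thm:recovery} for all intermediate bounds, this is covered.
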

\noindent These properties in Theorem \ref{thm:rate}, \ref{thm:recovery} and \ref{thm:sign} still hold for $\check{\Sigma}$ and $\check{\Theta}$.

\begin{theorem}[Rate of convergence and sign consistency for $\check{\Theta}$]\label{thm:both}
	Under all the conditions in Theorem \ref{thm:recovery}, the properties in Theorem \ref{thm:recovery} and Theorem \ref{thm:sign} also hold for $\check{\Theta}$ with another constant $C_2$. 
\end{theorem}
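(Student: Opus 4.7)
The plan is to reuse the proof architecture of Theorems \ref{thm:recovery} and \ref{thm:sign} essentially verbatim, exploiting the fact that the D-trace machinery of \cite{Zhang2014} only requires a positive semidefinite input covariance estimator whose sup-norm deviation from $\Sigma$ obeys a suitable tail bound. The second half of Theorem \ref{thm:rate} already supplies such a tail bound for $\check{\Sigma}$, differing from the bound for $\hat{\Sigma}$ only in the constant ($C_2$ in place of $C_1$). Moreover, $\check{\Sigma}$ is PSD by construction as the solution of the projection problem, so the loss in (\ref{ea}) is convex and the same KKT characterization applies.

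First, I would fix the good event $E=\{\|\check{\Sigma}-\Sigma\|_{\infty}\leq \gamma\lambda/[12(k_{\Sigma}k_{\Gamma}^{2}+k_{\Gamma})]\}$, which has probability at least $1-p^{2-\eta}$ by Theorem \ref{thm:rate} under the stated choice of $\lambda$ and lower bound on $n$, after possibly enlarging $C_{2}$. On $E$ I would run the primal-dual witness construction that underlies Theorem \ref{thm:recovery}: solve the restricted problem (\ref{ea}) on the true support $G$ to obtain a candidate $\bar{\Theta}$, then verify strict dual feasibility on $G^{c}$ through the irrepresentability Condition \ref{con} applied to $\Gamma(\check{\Sigma})$. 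The perturbation bound $\|\Gamma(\check{\Sigma})-\Gamma\|_{\infty}\leq \|\check{\Sigma}-\Sigma\|_{\infty}$, combined with the size of $E$, allows the same Taylor/matrix-inversion argument used for $\hat{\Gamma}$ to transfer the irrepresentability condition from $\Gamma$ to $\Gamma(\check{\Sigma})$. This yields that $\check{\Theta}$ coincides with the restricted minimizer, so $\mathrm{supp}(\check{\Theta})\subseteq G$.

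Next, I would derive the three norm bounds on $\check{\Theta}-\Theta$. The $\ell_{\infty}$ bound comes from the same fixed-point contraction argument used in \cite{Zhang2014}, where $\|\check{\Sigma}-\Sigma\|_{\infty}$ plays the role previously played by $\|\hat{\Sigma}-\Sigma\|_{\infty}$; since the two quantities share the same polynomial rate, the numerical constants in the bound are unchanged. The Frobenius and operator-norm bounds then follow by aggregating over the $s$ nonzero entries and by invoking $\min\{s^{1/2},d+1\}$ control of the operator norm on matrices with row sparsity $d$, exactly as in Theorem \ref{thm:recovery}. Sign consistency (Theorem \ref{thm:sign}) is then immediate from the lower bound on $n$ involving $\theta_{\min}^{-1}$, which forces the $\ell_{\infty}$ error to be strictly smaller than $\theta_{\min}/2$ on $E$, so that every nonzero entry of $\Theta$ is recovered with the correct sign and every zero entry remains zero.

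The only step requiring genuine care is constant tracking rather than new probabilistic work: the constant $C_{1}$ of Theorems \ref{thm:recovery}--\ref{thm:sign} absorbs the constant from the $\hat{\Sigma}$ tail bound, and replacing $\hat{\Sigma}$ by $\check{\Sigma}$ throughout simply replaces $C_{1}$ by the constant $C_{2}$ from the $\check{\Sigma}$ half of Theorem \ref{thm:rate}. No separate analysis is needed because the diagonal shift $\|\check{\Sigma}-\tilde{\Sigma}\|_{\infty}I_{p}$ that distinguishes $\hat{\Sigma}$ from $\check{\Sigma}$ was the only place where the proof for $\hat{\Theta}$ departed from a direct application of the D-trace analysis to a PSD input; for $\check{\Theta}$ that detour disappears and the argument becomes a direct specialization. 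The expected main obstacle is thus purely notational: making sure every $\|\hat{\Sigma}-\Sigma\|_{\infty}$ in the proof of Theorems \ref{thm:recovery} and \ref{thm:sign} admits a clean substitution by $\|\check{\Sigma}-\Sigma\|_{\infty}$ without inflating the rate, and checking that Condition \ref{con} and the quantities $\gamma,k_{\Gamma},k_{\Sigma}$ remain defined in terms of the true $\Sigma$ and not the estimator.
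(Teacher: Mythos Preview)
Your proposal is correct and follows essentially the same approach as the paper: the paper's proof of Theorem~\ref{thm:both} is literally the one-line remark that replacing $\hat{\Sigma},\hat{\Theta},C_1$ by $\check{\Sigma},\check{\Theta},C_2$ throughout the proof of Theorems~\ref{thm:recovery} and~\ref{thm:sign} yields the result, since that proof depends on the covariance estimator only through its positive semidefiniteness and the tail bound of Theorem~\ref{thm:rate}. Your more detailed discussion of the primal--dual witness and constant tracking simply unpacks what the paper leaves implicit in its invocation of Lemma~\ref{dtrace}.
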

The rate-of-convergence and sign consistency results in Theorem \ref{thm:recovery}, \ref{thm:sign} and \ref{thm:both} are closely related to the polynomial tail situation of \cite{Zhang2014}. The boundedness condition \ref{conC} and the requirement $\epsilon<6$ are assumed because the log-transformation is not Lipschitz near zero. Ignoring the complicated constants in theorem, for any $\eta >2 $ and any positive integer $m$, if we have $p < o(n^{m/\eta})$, or in other words, if $p$ tends to infinity not too fast, $\hat{\Theta}$ and $\check{\Theta}$ are consistent estimators of $\Theta$. Especially, the rate of convergence for $\hat{\Theta}$ and $\check{\Theta}$ is $O\left( \left( p^{\eta/m}/n\right)^{1/2} \right) $ under $l_{\infty}$-norm.

\section{Simulation studies}

\subsection{Simulation settings}

We conduct simulations to evaluate the performance of PLNet and compare with the available network inference methods including VPLN \citep{Chiquet2019} and glasso \citep{Friedman2008}. Both PLNet and VPLN are designed to estimate the precision matrix for count data in the PLN model. The glasso algorithm is a classical approach for continuous data in GGM and we apply glasso to the logarithmic transformation of the normalized data, which is defined as $\tilde{Y}_{ij} = (Y_{ij}+1)/\sum_{j = 1}^{p}Y_{ij},~i = 1,2,\cdots,n$, where $Y=\left[ Y_{ij}\right]_{1\leq i \leq n,1\leq j\leq p} $ is the observed count matrix with $n$ rows (cells) and $p$ columns (genes). We add 1 to all counts before taking normalization since there are many 0's in the count data. In all simulations, we estimate the library size $S_i$ for all methods by total sum scaling, which is a classical normalization method for scRNA-seq and is defined as the sum of counts within each cell.

We simulate count data from the PLN model with different choices of library sizes, the mean vectors and the precision matrices. The library sizes are generated from a log-normal distribution $ \text{N}\left( \log10,{\sigma_0}^2\right) $, with ${\sigma_0} = {0.1}$ or $ {0.3}$ representing low and high variations of library sizes across samples, respectively. The mean vector $\bmu$ is set as $\bmu=\left( -1.8,\dots,-1.8\right)^T$ or $\bmu = \left( -2.8,\dots,-2.8\right)^T$, where the former corresponds to a low-dropout scenario (about 10 percent of the counts are zero) and the latter to a high-dropout scenario (about 30 percent of the counts are zero). We consider the following four graph structures:
\begin{itemize}
	\item[1.] Banded Graph: Pairs $(i,j)$ of nodes are connected if $|i-j|\leq 2,~i\neq j$. All nonzero edges are set as $0.3$.
	
	\item[2.] Random Graph: Pairs of nodes are connected with probability $0.1$. The nonzero edges are set as $0.3$ with probability $0.8$ and as $-0.3$ with probability $0.2$.
	
	\item[3.] Scale-free Graph: The Barabasi-Albert model \citep{Barabasi1999} is used to generate a scale-free graph with power $1$. The nonzero edges are set as $0.3$.
	
	\item[4.] Blocked Graph: The nodes are divided into $5$ blocks of equal sizes. Pairs of nodes in the same block are connected with probability $0.1$ and the nonzero edges are set as $0.3$. Blocks are separated and different blocks have no edge connection.
\end{itemize}

The diagonal elements of the precision matrices are all first set as 1. If a precision matrix is not positive definite, a positive number is added in the diagonal elements of the precision matrix to guarantee positive definiteness. For all simulation data, we set the sample size $n=2000$ and consider different gene numbers $p=100,200 \text{ and } 300$. For each combination of model settings, we independently repeat simulations 100 times.

\subsection{Performance comparison}

Table \ref{1} shows the area under precision and recall curve (AUPR) of each estimator. AUPRs are calculated by varying the tuning parameters (i.e. the penalty parameters of the three estimators). As expected, the AUPR decreases as the number of genes increases. AUPRs in the high-dropout cases are generally smaller than in the low-dropout cases. Scenarios with a high variation of the library size also generally have smaller AUPRs than scenarios with a low variation. PLNet is the most robust estimation among these three estimators and outperforms VPLN and glasso in almost all simulation settings in AUPR, especially for the settings with high dropouts or with high variations of the library size. For example, for $p=100$, PLNet achieves an AUPR of 0.95 for the banded graph under the scenario of the high dropout rate and high variation, while VPLN and glasso only have AUPRs of 0.42 and 0.13, respectively. The results of the area under the Receiver Operating Characteristic curve (AUC) are similar and shown in Supplementary Material.
\begin{table}
	\footnotesize
	\centering
	\caption{Comparisons of PLNet with VPLN and glasso in terms of the area under precision and recall curve (AUPR) on simulation results. The results are averages over 100 replicates with standard deviations in brackets}
	\footnotesize
	\begin{tabular}{cccccccc}
		Library size&&\multicolumn{2}{c}{$p=100$}&\multicolumn{2}{c}{$p=200$}&\multicolumn{2}{c}{$p=300$}\\
		variation&Dropout&Low&High&Low&High&Low&High\\
		&&\multicolumn{6}{c}{Banded graph}\\
		&PLNet&\textbf{0.98 (0.01)} & \textbf{0.95 (0.01)} & \textbf{0.95 (0.01)} & \textbf{0.92 (0.01)} & \textbf{0.92 (0.01)} & \textbf{0.88 (0.01)} \\
		Low&VPLN&0.9 (0.03) & 0.46 (0.11) & 0.9 (0.01) & 0.52 (0.02) & 0.91 (0.01) & 0.54 (0.05) \\
		&glasso&0.85 (0.01) & 0.15 (0.02) & 0.9 (0.01) & 0.38 (0.03) & 0.92 (0.01) & 0.51 (0.03) \\
		&PLNet&\textbf{0.98 (0.01)} & \textbf{0.95 (0.01)} & \textbf{0.95 (0.01)} & \textbf{0.92 (0.01)} & \textbf{0.92 (0.01)} & \textbf{0.87 (0.01)} \\
		High&VPLN&0.81 (0.15) & 0.42 (0.15) & 0.82 (0.08) & 0.47 (0.13) & 0.66 (0.03) & 0.48 (0.06) \\
		&glasso&0.65 (0.03) & 0.13 (0.04) & 0.73 (0.01) & 0.32 (0.05) & 0.76 (0.01) & 0.42 (0.04) \\
		&&\multicolumn{6}{c}{Random graph}\\
		&PLNet&\textbf{0.83 (0.03)} & \textbf{0.66 (0.05)} & \textbf{0.65 (0.02)} & \textbf{0.41 (0.03)} & \textbf{0.52 (0.02)} & \textbf{0.29 (0.02)} \\
		Low&VPLN&0.59 (0.03) & 0.23 (0.04) & 0.45 (0.03) & 0.18 (0.02) & 0.38 (0.02) & 0.14 (0.01) \\
		&glasso&0.56 (0.03) & 0.23 (0.02) & 0.47 (0.02) & 0.19 (0.01) & 0.40 (0.02) & 0.15 (0.01) \\
		&PLNet&\textbf{0.82 (0.03)} & \textbf{0.64 (0.05)} & \textbf{0.64 (0.03)} & \textbf{0.38 (0.03)} & \textbf{0.50 (0.02)} & \textbf{0.27 (0.02)} \\
		High&VPLN&0.50 (0.05) & 0.12 (0.02) & 0.32 (0.03) & 0.11 (0.01) & 0.23 (0.03) & 0.11 (0.01) \\
		&glasso&0.38 (0.03) & 0.16 (0.02) & 0.28 (0.02) & 0.14 (0.01) & 0.22 (0.01) & 0.13 (0.01) \\
		&&\multicolumn{6}{c}{Scale-free Graph}\\
		&PLNet&\textbf{0.79 (0.14)} & \textbf{0.61 (0.14)} & \textbf{0.58 (0.13)} & \textbf{0.38 (0.14)} & \textbf{0.46 (0.16)} & \textbf{0.27 (0.11)} \\
		Low&VPLN&0.64 (0.18) & 0.28 (0.06) & 0.51 (0.14) & 0.21 (0.04) & 0.45 (0.14) & 0.17 (0.04) \\
		&glasso&0.55 (0.15) & 0.31 (0.07) & 0.45 (0.14) & 0.23 (0.06) & 0.42 (0.14) & 0.19 (0.05) \\
		&PLNet&\textbf{0.80 (0.14)} & \textbf{0.59 (0.13)} & \textbf{0.53 (0.16)} & \textbf{0.34 (0.14)} & \textbf{0.50 (0.17)} & \textbf{0.28 (0.11)} \\
		High&VPLN&0.55 (0.17) & 0.16 (0.06) & 0.36 (0.13) & 0.10 (0.05) & 0.32 (0.12) & 0.08 (0.04) \\
		&glasso&0.46 (0.09) & 0.26 (0.06) & 0.33 (0.08) & 0.20 (0.04) & 0.30 (0.09) & 0.14 (0.03) \\
		&&\multicolumn{6}{c}{Blocked graph}\\
		&PLNet&\textbf{0.70 (0.03)} & \textbf{0.58 (0.04)} & \textbf{0.57 (0.03)} & \textbf{0.41 (0.03)} & \textbf{0.48 (0.02)} & \textbf{0.32 (0.03)} \\
		Low&VPLN&0.68 (0.03) & 0.33 (0.06) & 0.53 (0.05) & 0.26 (0.04) & 0.46 (0.02) & 0.22 (0.04) \\
		&glasso&0.56 (0.03) & 0.28 (0.02) & 0.52 (0.02) & 0.26 (0.02) & 0.46 (0.02) & 0.23 (0.02) \\
		&PLNet&\textbf{0.71 (0.03)} & \textbf{0.57 (0.04)} & \textbf{0.56 (0.02)} & \textbf{0.41 (0.03)} & \textbf{0.47 (0.02)} & \textbf{0.30 (0.03)} \\
		High&VPLN&0.59 (0.04) & 0.19 (0.03) & 0.40 (0.04) & 0.16 (0.02) & 0.32 (0.03) & 0.14 (0.01) \\
		&glasso&0.43 (0.03) & 0.21 (0.02) & 0.35 (0.03) & 0.19 (0.01) & 0.3 (0.02) & 0.17 (0.01)	
	\end{tabular}
	\label{1}
\end{table}

We also compare the true positive rates (TPR), the true discovery rates (TDR), and the Frobenius risks of the three estimators with the tuning parameters selected based on the BIC. The Frobenius risk is defined as the Frobenius norm of the difference between the true and estimated precision matrices. Table \ref{2} summarizes these results for the random graphs. In the low dropout case, PLNet achieves acceptable TPR and much higher TDR than those of the other two methods in most cases. In the high dropout case, TPR and TDR of PLNet are higher than that of the other two methods in most cases. The Frobenius risks of PLNet are also smaller than those of the other two methods in many cases. The results of the other three types of graphs are similar to the random graph and are shown in Supplementary Material. 
\begin{table}
	\footnotesize
	\centering
	\caption{Comparisons of PLNet with VPLN and glasso in terms of the true positive rate (TPR), the true discovery rate (TDR), and the Frobenius risk for random graph. The tuning parameters of the three methods are selected by BIC criterion. The results are averages over 100 replicates with standard deviations in brackets}
	\begin{tabular}{cccccccc}
		&&\multicolumn{2}{c}{$p=100$}&\multicolumn{2}{c}{$p=200$}&\multicolumn{2}{c}{$p=300$}\\
		&Dropout&Low&High&Low&High&Low&High\\
		&&\multicolumn{6}{c}{Low library size variation}\\
		&PLNet&0.87 (0.05)&\textbf{0.63 (0.07)} &0.25 (0.07)&\textbf{0.12 (0.03)} & 0.05 (0.02) & \textbf{0.02 (0.01)}\\
		TPR&VPLN&\textbf{0.93 (0.15)} & 0.12 (0.11) & \textbf{0.50 (0.31)} & 0.02 (0.01) & \textbf{0.52 (0.15)} & 0.01 (0.01) \\
		&glasso&0.65 (0.09) & 0.17 (0.04) & 0.18 (0.04) & 0.04 (0.01) & 0.06 (0.02) & 0.01 (0.01) \\
		&PLNet&\textbf{0.66 (0.05)} & \textbf{0.62 (0.05)} & \textbf{0.84 (0.03)} & \textbf{0.76 (0.05)} & \textbf{0.87 (0.04)} & \textbf{0.79 (0.07)} \\
		TDR&VPLN&0.36 (0.09) & 0.38 (0.08) & 0.45 (0.10) & 0.45 (0.09) & 0.40 (0.06) & 0.41 (0.11) \\
		&glasso&0.51 (0.04) & 0.39 (0.05) & 0.61 (0.04) & 0.47 (0.05) & 0.64 (0.05) & 0.44 (0.05) \\
		&PLNet&\textbf{7.57 (0.56)} & \textbf{8.97 (0.37)} & 22.31 (0.57) & 23.80 (0.34) & 36.41 (0.67) & 38.06 (0.37) \\
		Frobenius risk&VPLN&9.01 (2.76) & 17.08 (1.65) & 25.15 (5.27) & 35.55 (1.36) & 35.92 (3.60) & 53.79 (1.77) \\
		&glasso&8.43 (0.28) & 12.04 (0.29) & \textbf{21.90 (0.21)} & \textbf{19.89 (0.11)} & \textbf{36.10 (0.24)} & \textbf{28.52 (0.01)} \\
		&&\multicolumn{6}{c}{High library size variation}\\
		&PLNet&0.86 (0.05)&\textbf{0.60 (0.08)} & 0.24 (0.05)&\textbf{0.10 (0.02)} & 0.05 (0.02) & \textbf{0.08 (0.01)}\\
		TPR&VPLN&\textbf{0.89 (0.19)} & 0.15 (0.13) & \textbf{0.59 (0.11)} & 0.03 (0.04) & \textbf{0.36 (0.11)} & 0.07 (0.01) \\
		&glasso&0.44 (0.06) & 0.34 (0.07) & 0.12 (0.03) & 0.04 (0.01) & 0.04 (0.01) & 0.01 (0.01) \\
		&PLNet&\textbf{0.66 (0.04)} & \textbf{0.64 (0.05)} & \textbf{0.85 (0.03)} & \textbf{0.75 (0.06)} & \textbf{0.85 (0.04)} & \textbf{0.75 (0.07)} \\
		TDR&VPLN&0.31 (0.05) & 0.22 (0.08) & 0.27 (0.03) & 0.31 (0.08) & 0.24 (0.04) & 0.25 (0.18) \\
		&glasso&0.38 (0.04) & 0.09 (0.04) & 0.43 (0.03) & 0.33 (0.04) & 0.44 (0.04) & 0.33 (0.05) \\
		&PLNet&\textbf{7.61 (0.47)} & \textbf{9.16 (0.36)} & \textbf{22.31 (0.43)} & 23.88 (0.34) & \textbf{36.34 (0.51)} & 38.18 (0.39) \\
		Frobenius risk&VPLN&9.39 (2.23) & 16.91 (2.23) & 22.50 (1.83) & 36.00 (2.30) & 37.65 (4.59) & 50.89 (7.42) \\
		&glasso&9.19 (0.17) & 14.04 (0.96) & 22.44 (0.22) & \textbf{19.00 (0.04)} & 36.76 (0.22) & \textbf{30.11 (0.19)} 	
	\end{tabular}
	\label{2}
\end{table}

To further demonstrate the performance of PLNet, we visualize the mean networks predicted by the three methods for the banded graph with $p=100$ over the 100 simulations (Fig. \ref{fig:Banded-graph frequency}). More specifically, we calculate the relative frequency $F_{ij}$ that an algorithm reports edges between nodes $i$ and $j$ ($i,j=1,\cdots,p$) over the 100 simulations. For positions $(i,j)$ with $\Theta_{ij}\neq 0$, $F_{ij}$ is the proportion that an algorithm correctly recovers the edge in the 100 simulations, while for positions $(i,j)$ with $\Theta_{ij} = 0$, $F_{ij}$ is the the proportion that an algorithm falsely predicts edges between nodes $i$ and $j$ in the 100 simulations. We plot the relative frequency matrices of the three methods in Fig. \ref{fig:Banded-graph frequency}. The frequencies are represented by colors from red to blue with false edges colored in red and true predictions in blue. We also plot the true network matrix for reference. We clearly see that PLNet is able to detect more true positives while having less false positives than the other methods, especially for the high dropout case. 
\begin{figure}
	\centering
	\includegraphics[width=6in]{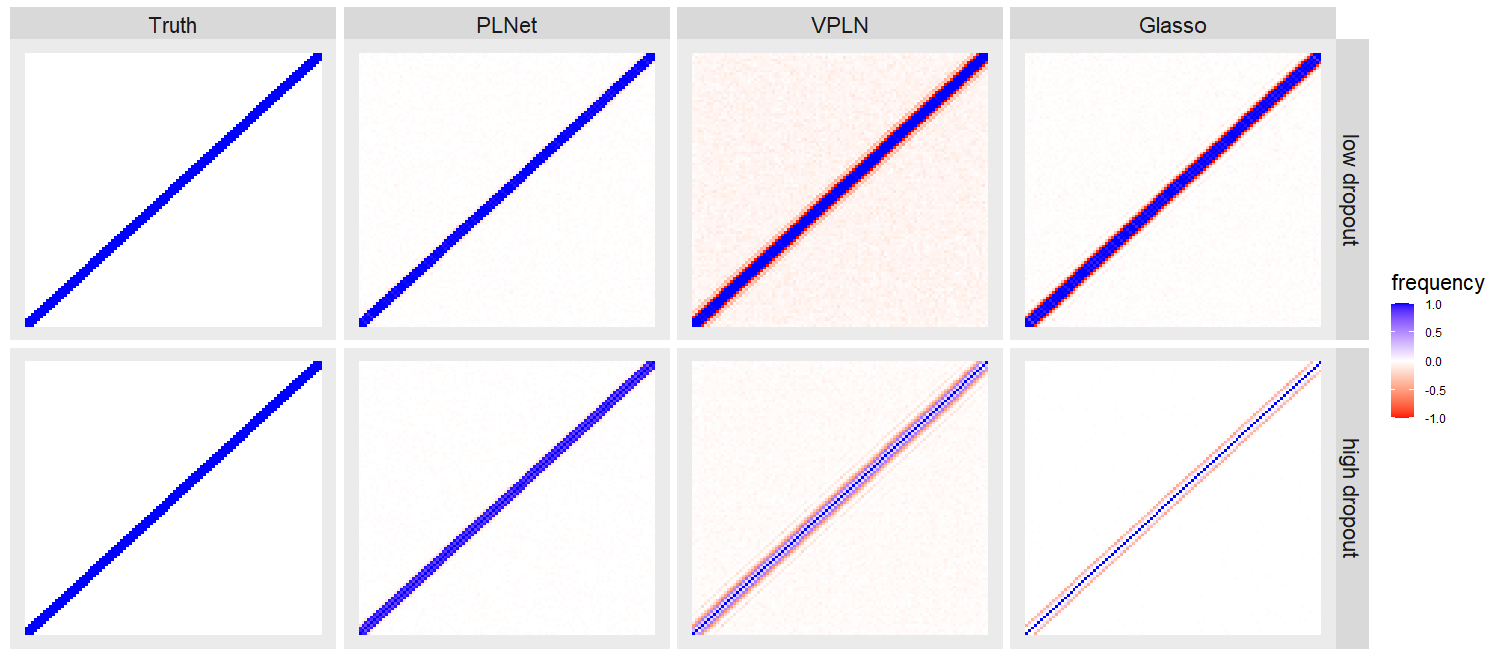}
	\caption{The mean networks predicted by PLNet with VPLN and glasso for the banded graph with 100 nodes. False edges are colored in red and true edges are in blue.}
	\label{fig:Banded-graph frequency}
\end{figure}

Table \ref{3} shows the mean computational time of the three algorithms. We also include the computational time of $\check{\Theta}$ (plugging-in $\check{\Sigma}$ in the lasso penalized D-trace loss) and denote it as PLNet* in the table. The glasso method is computationally the most efficient since its optimization problem is much simpler than that of PLNet and VPLN. PLNet is computationally more efficient than VPLN, sometimes by a very large amount. Interestingly, we observe that VPLN generally takes much more time for the high dropout cases than the low dropout cases. In comparison, the computational efficiency of PLNet is roughly the same for the low and the high dropout cases. VPLN is computationally less efficient because VPLN involves a series of glasso optimizations from the variational approximation. PLNet is computational more expensive than glasso because it needs to first find a projection of the estimator $\tilde{\Sigma}$ in the semi-definite matrix space. Finally, PLNet is generally more efficient than PLNet*. In extreme cases, the computational time of PLNet is only about $22\%$ of PLNet*. 
\begin{table}
	\footnotesize
	\centering
	\caption{Comparison of PLNet with VPLN and glasso in terms of CPU time (minute).``PLNet*" is the PLNet-based estimator $\check{\Theta}$. The results are averages over 200 replicates including cases of low and high library size variation. Numbers in brackets are standard deviations}
	\begin{tabular}{ccccccccc}
		&\multicolumn{2}{c}{$p=100$}&&\multicolumn{2}{c}{$p=200$}&&\multicolumn{2}{c}{$p=300$}\\
		Dropout&Low&High&&Low&High&&Low&High\\
		&\multicolumn{8}{c}{Banded graph}\\
		PLNet&4.20 (0.42)&4.32 (0.36)&&10.08 (0.90)&34.50 (2.10)&&20.10 (2.22)&118.92 (6.00)\\
		PLNet*&5.28 (0.57)&6.51 (0.54)&&13.85 (1.24)&56.77 (3.46)&&32.63 (3.60)&216.72 (10.93)\\
		VPLN&3.78 (0.18)&5.58 (1.26)&&14.04 (1.38)&151.80 (60.18)&&29.58 (3.54)&994.38 (270.60)\\
		glasso&0.06 (0.06)&0.06 (0.06)&&0.12 (0.06)&0.24 (0.06)&&0.36 (0.06)&0.96 (0.06)\\
		&\multicolumn{8}{c}{Random graph}\\
		PLNet&0.42 (0.06)&0.48 (0.06)&&3.12 (0.42)&3.36 (0.30)&&15.42 (1.50)&9.90 (0.66)\\
		PLNet*&0.77 (0.11)&1.00 (0.13)&&4.68 (0.63)&6.22 (0.56)&&19.63 (1.91)&20.86 (1.39)\\
		VPLN&4.26 (0.78)&7.20 (0.96)&&12.30 (1.62)&22.86 (13.44)&&31.26 (6.12)&206.22 (240.66)\\
		glasso&0.06 (0.06)&0.06 (0.06)&&0.12 (0.06)&0.24 (0.06)&&0.36 (0.06)&0.90 (0.06)\\
		&\multicolumn{8}{c}{Scale-free graph}\\
		PLNet&0.30 (0.06)&0.30 (0.06)&&1.26 (0.06)&1.32 (0.06)&&3.54 (0.54)&104.10 (4.98)\\
		PLNet*&1.31 (0.26)&1.23 (0.25)&&5.16 (0.25)&5.90 (0.18)&&15.36 (2.39)&467.19 (22.35)\\
		VPLN&5.88 (0.96)&8.10 (1.62)&&24.30 (3.96)&79.44 (34.50)&&49.68 (33.60)&468.48 (185.52)\\
		glasso&0.06 (0.06)&0.06 (0.06)&&0.12 (0.06)&0.18 (0.06)&&0.24 (0.06)&0.84 (0.24)\\
		&\multicolumn{8}{c}{Banded graph}\\
		PLNet&0.30 (0.06)&0.30 (0.06)&&1.80 (0.30)&1.80 (0.30)&&110.52 (4.98)&114.84 (5.64)\\
		PLNet*&0.93 (0.19)&0.93 (0.19)&&3.84 (0.64)&4.72 (0.79)&&189.99 (8.46)&241.16 (11.84)\\
		VPLN&4.50 (0.78)&8.22 (3.12)&&12.54 (1.56)&39.78 (38.16)&&40.74 (7.32)&218.04 (296.76)\\
		glasso&0.06 (0.06)&0.06 (0.06)&&0.12 (0.06)&0.18 (0.06)&&0.30 (0.06)&0.78 (0.06)
	\end{tabular}
	\label{3}
\end{table}
\section{Application to a scRNA-seq dataset}
We apply PLNet and VPLN to infer the gene regulatory network of CD14+ Monocytes profiled in \citet{Kang2018}. The single cells are profiled in two different conditions, IFN-$\beta$-treated and control. IFN-$\beta$ is a cytokine in the interferon family that influences the transcriptional profiles for many genes, especially that in the JAK/STAT pathway \citep{Mostafavi2016}. We focus on the IFN-$\beta$-treated cells (2147 cells) and use the top 200 highly variable genes that are used in \cite{Stuart2019} for network analysis.

We first compare the networks of PLNet and VPLN with the parameters tuned such that the network densities are around 5\%. Gene Ontology (GO) analysis \citep{Kuleshov2016} shows that the 200 genes mainly involve in 4 major biological processes, including ``Cytokine-mediated signaling pathway" (Module $M_1$), ``neutrophil mediated immunity" (Module $M_2$), ``cellular protein metabolic process" (Module $M_3$), and ``proteolysis" (Module $M_4$). Figure \ref{fig:heatmap1} shows the predicted networks of the genes in the 4 modules by PLNet and VPLN, where the colors represent the partial correlations between genes. The partial correlation given by PLNet between genes $i$ and $j$ is defined as $-\hat{\Theta}_{ij}/(\hat{\Theta}_{ii}\hat{\Theta}_{ij})^{1/2}$. The partial correlation given by VPLN is defined similarly. We clearly see that the network given by PLNet tend to have more connections within the modules than VPLN. To see this more clearly, for each module $M_k$, we calculate the ratio between within-module and between-module connections $R(M_k) = \Sigma_{i,j\in M_k}W_{ij}/\Sigma_{i\in M_k,~j\notin M_k} W_{ij}$, where the weights $W_{ij}$ are set as the partial correlation between nodes $i$ and $j$ (weighted within-between connection ratio) or are set as 1 and 0 depending on whether nodes $i$ and $j$ are connected (unweighted within-between connection ratio). The within-between connection ratios of PLNet are much larger than that of VPLN in most cases (Table \ref{tab4}). Similar results also hold for networks of other densities or the networks chosen by the BIC (See Supplementary Material). 
\begin{figure}
	\centering
	\subfigure[PLNet]{
		\begin{minipage}[t]{0.5\linewidth}
			\includegraphics[width=3in]{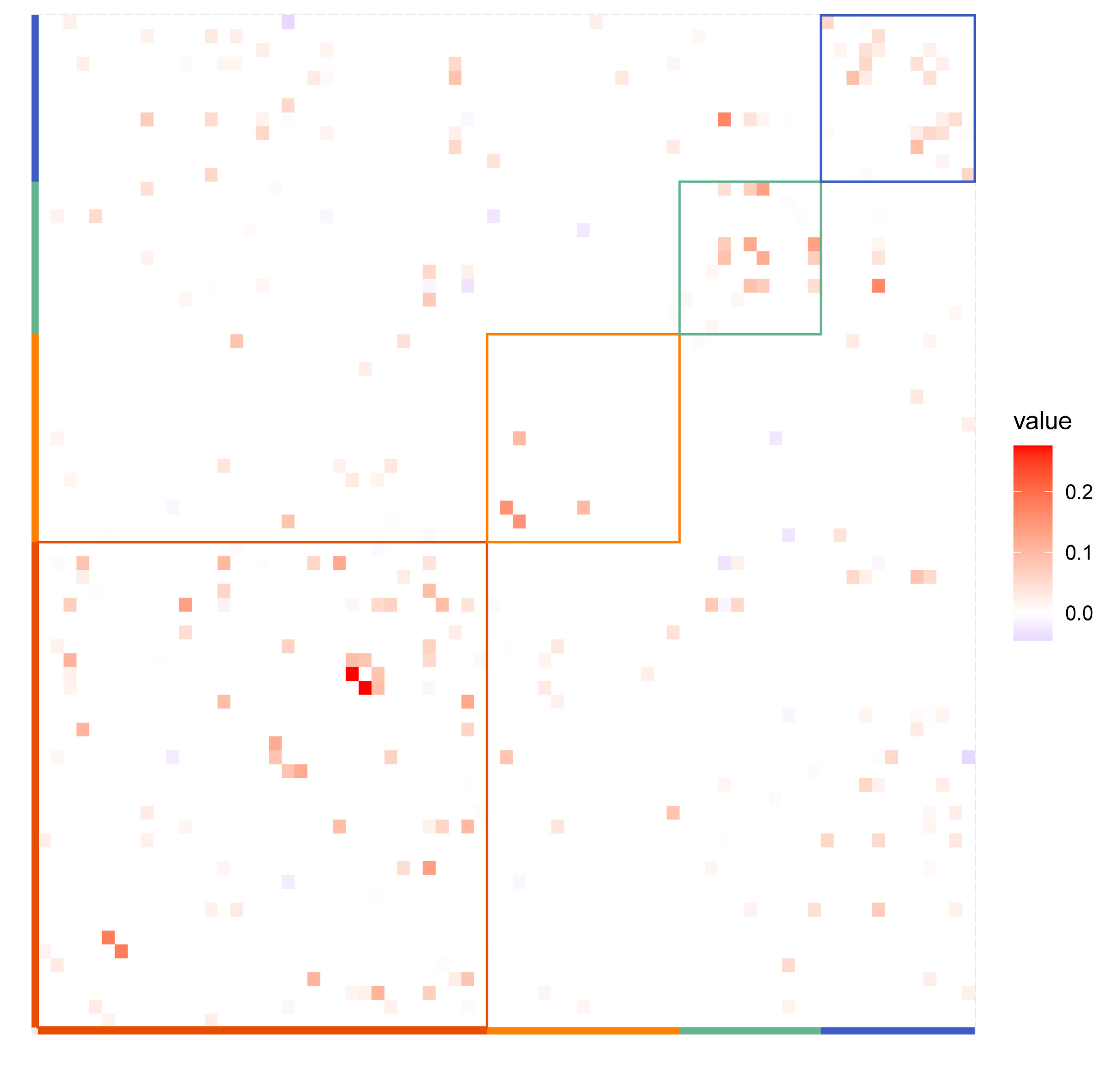}
		\end{minipage}%
	}%
	\subfigure[VPLN]{
		\begin{minipage}[t]{0.5\linewidth}
			\includegraphics[width=3in]{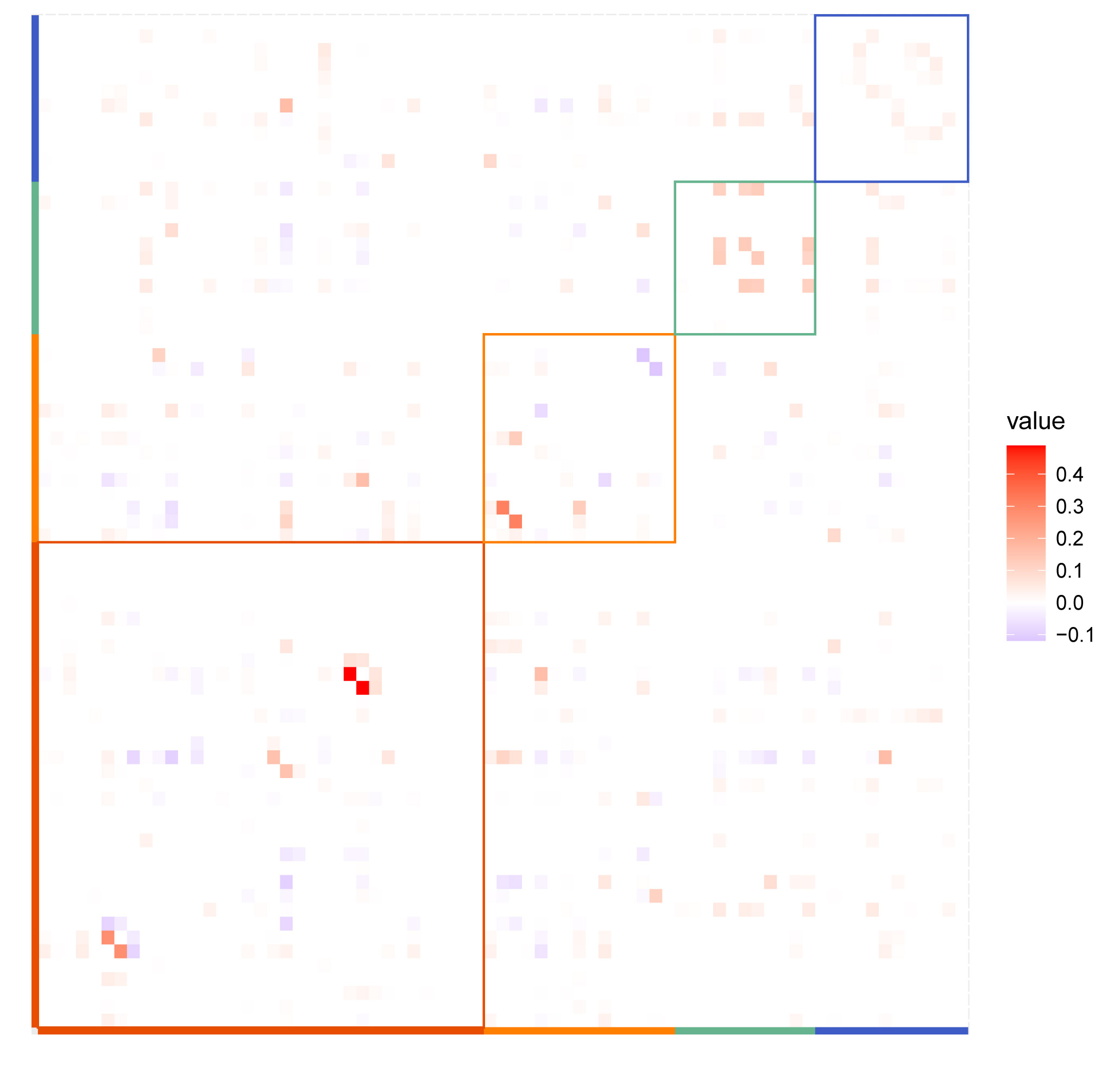}
		\end{minipage}%
	}%
	\caption{Heat maps of partial correlations between genes in the 4 GO modules given by PLNet (a) and VPLN (b). Red: cytokine-mediated signaling pathway (Module $M_1$); Yellow: neutrophil mediated immunity (Module $M_2$); Green: cellular protein metabolic process (Module $M_3$); Blue: proteolysis (Module $M_4$)}
	\label{fig:heatmap1}
\end{figure}
\begin{table}
	\footnotesize
	\centering
	\caption{The within-between connection ratios of the 4 major modules in the networks estimated by PLNet and VPLN tuned such that the network densities are around 5\%}
	\begin{tabular}{cccccc}
		Graph& Method &Module 1&Module 2&Module 3&Module 4\\
		Weighted &PLNet &\textbf{0.751} &\textbf{0.448}& \textbf{0.623}&\textbf{0.419}\\
		&VPLN &0.563 &0.401& 0.497&0.245\\
		Unweighted &PLNet &\textbf{0.597} &0.148& \textbf{0.429}&\textbf{0.393}\\
		&VPLN &0.467 &\textbf{0.216}& 0.171&0.22
	\end{tabular}
	\label{tab4}
\end{table}
We then compare the networks of PLNet and VPLN with the parameters tuned by the BIC. PLNet identifies 6 genes connecting to IFNB1, which encodes the IFN-$\beta$ protein, while VPLN does not find any genes connecting to IFNB1. The 6 genes are CCL13, CCL23, CXCL1, IL18, MT1G, and PRR16. Many of these edges connecting IFNB1 are probably true regulatory relationships. For example, CXCL1 and MT1G have been previously reported to be regulated by IFNB1 \citep{Jablonska2014,Hilpert2008}, while CCL13 and CCL23, two of Cys-Cys chemokine family members, are shown to be regulated by IFN-$\beta$ through the tumor necrosis factor-alpha (TNF-$\alpha$) \citep{Ozenci2000,Hornung2000}. Among the 200 genes, 14 genes (such as IFNB1 and MT1G) are only expressed in the IFN-$\beta$-treated cells. Presumably, these genes should be upregulated through a certain regulatory network upon IFN-$\beta$ stimulation and the inferred regulatory network should contain edges connecting these genes. In total, PLNet reports 238 edges including 41 edges connecting these 14 genes, and 11 of 14 genes have nonzero degrees. VPLN reports much more edges than PLNet. There are 428 edges, including only 3 edges involving the 14 genes, in the VPLN network, and only 2 of the 14 genes have nonzero degrees. We further plot the total degrees of these genes predicted by PLNet and VPLN as well as the expected degree of these genes in random networks at different network densities (Fig. \ref{fig:degreeplot}). The total degree of the 14 genes in the VPLN network is much smaller than in the PLNet network, and even smaller than in the random network. Among the 200 genes, 2 genes (MYC and KLF2) are transcription factors with available ChIP-seq data \citep{Rouillard2016,Lachmann2010,Encode2004}.  We find that among the genes detected to be the target of these two transcription factors by PLNet and VPLN, 83\% (6/7) and 73\% (8/11) of genes are supported by ChIP-seq experiments, respectively.      

\begin{figure}
	\centering
	\includegraphics[width=5in]{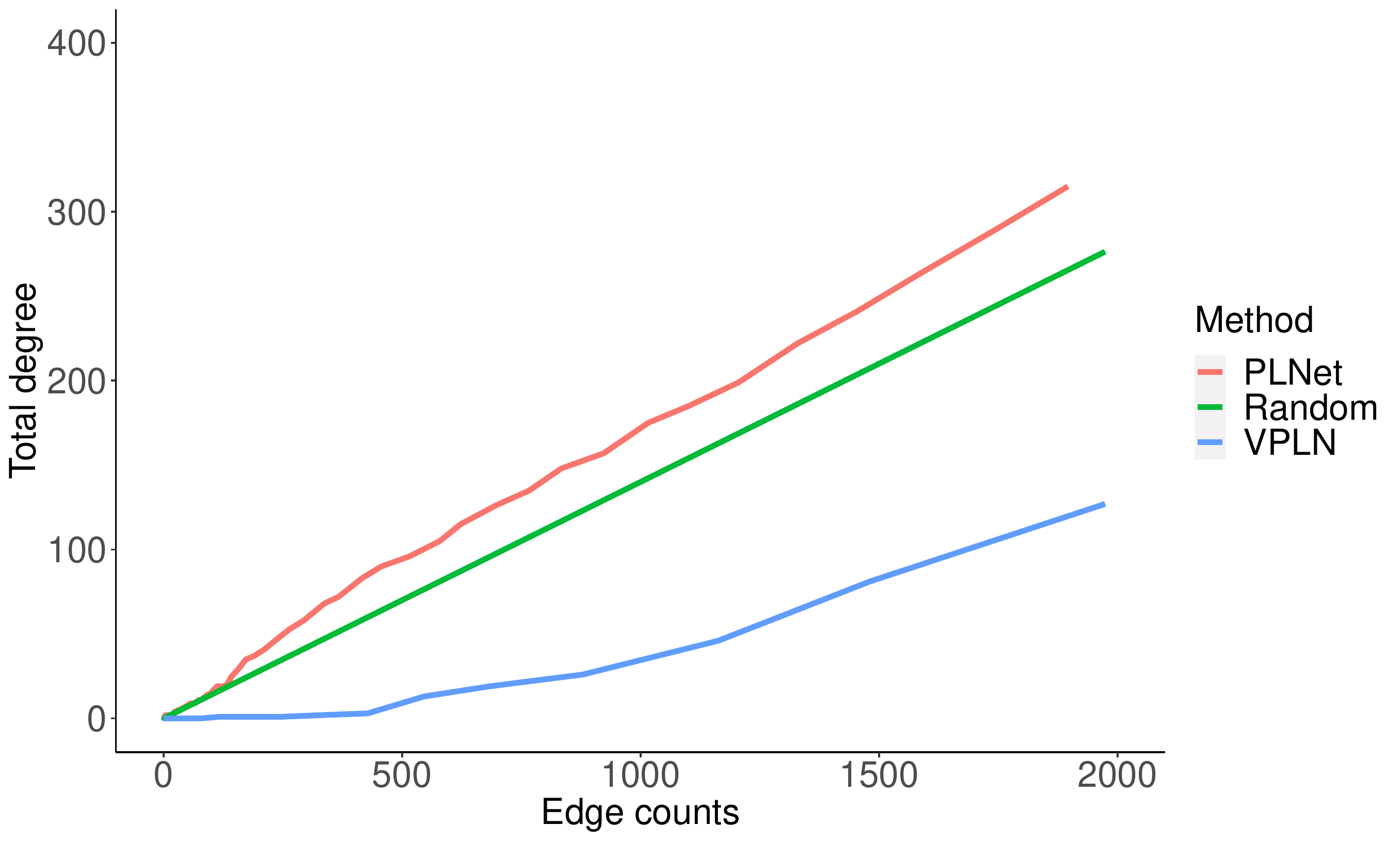}
	\caption{The total degrees of the 14 genes which are only expressed in the IFN-$\beta$-treated cells in the networks estimated by PLNet and VPLN at various network densities. Random refers to the expected degrees of the 14 genes for totally random networks with  various network densities.}
	\label{fig:degreeplot}
\end{figure}

\section{Discussion}
\label{sec4}

In this paper, we consider the PLN graphical model for count data. This model has an intuitive explanation for single-cell gene regulatory network analysis. To estimate the underlying precision matrix, we propose a two-step estimator, using the moment method to estimate the covariance matrix and then minimizing the penalized D-trace loss to estimate the precision matrix. The simplicity of this estimation procedure allows us to establish consistency theory for the proposed PLNet estimator even for the high dimensional setting. The numerical analysis also shows that the PLNet method outperforms available methods. 

The proposed method can be generalized in several ways. A straightforward generalization is to the differential network analysis based on our earlier work \citep{yuan2017differential} in single cells. Another generalization is gene regulatory network analysis of mixtures of cell populations. Different cell populations may have different gene regulatory networks and we could jointly model the mixture and infer the gene regulatory networks for all cell populations.

\section{Appendix}
\subsection{Technical proofs}
\subsubsection{Lammas and proofs}
We need two lemmas for the proofs of the theorems in the paper. Lamma \ref{dtrace} is the Lemma A1 (b) and (c) in D-trace method \citep{Zhang2014}, the proof of the lemma \ref{lem:3} is given in the Supplementary Material.

\begin{lemma}\label{dtrace}
	We define
	$$\breve{\Theta}=\argmin_{A=A^T}\frac{1}{2}{\rm tr}\left(\hat{\Sigma}A^2\right)-{\rm tr}\left(A\right)+\lambda\left\|A\right\|_{1,\text{off}}. $$
	Then the following hold:
	
	(a) $\text{vec}\left(\breve{\Theta}\right)_{G^c}=0$, if 
	$$\|\hat{\Sigma} - \Sigma \|_{\infty}<1/\left( 12dk_{\Gamma}\right) ,$$
	$$6\|\hat{\Sigma} - \Sigma \|_{\infty}\left(k_{\Sigma}k_{\Gamma}^2+k_{\Gamma}\right)\leq 0.5\gamma {\rm min}\left\{\lambda,1\right\};$$
	
	(b) assuming the conditions in part (a), we also have
	$$\left\|\breve{\Theta}-\Theta\right\|_{\infty}<\lambda k_{\Gamma}+\frac{5}{2}d\left(1+\lambda\right)\|\hat{\Sigma} - \Sigma \|_{\infty} k_{\Gamma}^2.$$
\end{lemma}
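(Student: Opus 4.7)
The plan is to carry out a primal-dual witness (PDW) argument, the standard approach for analyzing lasso-type estimators, as in \cite{Zhang2014}. First I would write the first-order optimality conditions for the D-trace program: the gradient of $\tfrac{1}{2}\tr(\hat{\Sigma}A^2)-\tr(A)$ at $A$ is $\tfrac{1}{2}(\hat{\Sigma}A+A\hat{\Sigma})-I_p$, whose vectorization is $\hat{\Gamma}\,\text{vec}(A)-\text{vec}(I_p)$, so $\breve{\Theta}$ must satisfy
\begin{equation*}
\hat{\Gamma}\,\text{vec}(\breve{\Theta})-\text{vec}(I_p)+\lambda\,\text{vec}(Z)=0
\end{equation*}
for some subgradient $Z$ of $\|\cdot\|_{1,\text{off}}$ at $\breve{\Theta}$, with $Z_{ii}=0$, $Z_{ij}=\sgn(\breve{\Theta}_{ij})$ whenever $\breve{\Theta}_{ij}\ne 0$, and $|Z_{ij}|\le 1$ otherwise.

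Second, I would construct a PDW candidate. Let $\tilde{\Theta}$ minimize the same penalized D-trace loss restricted to matrices supported on $G$. The restricted problem is strictly convex because $\hat{\Gamma}_{G,G}$ is invertible whenever $\|\hat{\Sigma}-\Sigma\|_\infty<1/(12dk_{\Gamma})$, a fact proved by expanding $\hat{\Gamma}_{G,G}^{-1}=\sum_{k\ge 0}\Gamma_{G,G}^{-1}\{(\Gamma-\hat{\Gamma})_{G,G}\Gamma_{G,G}^{-1}\}^k$ and bounding the geometric series using $\|(\hat{\Gamma}-\Gamma)_{G,G}\|_{1,\infty}\le 2d\|\hat{\Sigma}-\Sigma\|_\infty$. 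The KKT equation restricted to $G$, combined with the identity $\Gamma\,\text{vec}(\Theta)=\text{vec}(I_p)$, yields
\begin{equation*}
\text{vec}(\tilde{\Theta}-\Theta)_G=\hat{\Gamma}_{G,G}^{-1}\bigl\{(\Gamma-\hat{\Gamma})_{G,G}\,\text{vec}(\Theta)_G-\lambda Z_G\bigr\},
\end{equation*}
and taking $\|\cdot\|_\infty$ on both sides produces the bound asserted in part (b).

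Third, to prove part (a), I would verify that $\tilde{\Theta}$ extended by zeros to $G^c$ is globally optimal by checking strict dual feasibility on $G^c$. Reading the KKT block on $G^c$ gives
\begin{equation*}
Z_{G^c}=-\lambda^{-1}\bigl\{\hat{\Gamma}_{G^c,G}\,\text{vec}(\tilde{\Theta})_G-\text{vec}(I_p)_{G^c}\bigr\},
\end{equation*}
and decomposing $\hat{\Gamma}=\Gamma+(\hat{\Gamma}-\Gamma)$ splits $Z_{G^c}$ into a leading term governed by the irrepresentability bound $\|\Gamma_{G^c,G}\Gamma_{G,G}^{-1}\|_{1,\infty}\le 1-\gamma$ plus error terms proportional to $\|\hat{\Sigma}-\Sigma\|_\infty(k_{\Sigma}k_{\Gamma}^2+k_{\Gamma})$. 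The second hypothesis $6\|\hat{\Sigma}-\Sigma\|_\infty(k_{\Sigma}k_{\Gamma}^2+k_{\Gamma})\le 0.5\gamma\min\{\lambda,1\}$ is designed precisely so that the errors are absorbed and $\|Z_{G^c}\|_\infty<1$ strictly; strict convexity on $G$ together with strict dual feasibility then gives uniqueness and forces $\text{vec}(\breve{\Theta})_{G^c}=0$.

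The main obstacle is the careful bookkeeping of constants in both the Neumann-series inversion of $\hat{\Gamma}_{G,G}$ and the strict-dual-feasibility calculation. Every factor --- the $1/(12dk_{\Gamma})$, the $0.5\gamma$, the coefficient $5/2$ in part (b), and the combination $(k_{\Sigma}k_{\Gamma}^2+k_{\Gamma})$ --- must come out exactly as stated because these constants propagate verbatim into the convergence-rate and sign-consistency bounds of Theorems \ref{thm:recovery} and \ref{thm:sign}. Once the bookkeeping is pinned down, the argument is structurally the same as the proof of Lemma A1 in \cite{Zhang2014}.
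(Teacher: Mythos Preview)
Your proposal is correct and matches the paper's approach: the paper does not give an independent proof of this lemma but simply records that it is Lemma~A1~(b) and~(c) of \cite{Zhang2014}, whose proof is precisely the primal--dual witness argument you have sketched. Your outline of the KKT conditions, the Neumann-series control of $\hat{\Gamma}_{G,G}^{-1}$, and the strict-dual-feasibility check on $G^c$ is the standard route, and your closing remark that the argument is structurally that of Lemma~A1 in \cite{Zhang2014} is exactly how the paper itself disposes of this lemma.
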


\begin{lemma}\label{lem:3}
	Let $\tilde{\Sigma}$ be the moment estimator in Equation (\ref{equ3}). 
	\begin{equation}\label{equ3}
	\tilde{\sigma}_{jk} = \begin{cases}
	\log⁡\left( n^{-1} \sum_{i=1}^{n}\left[ \left\lbrace Y_{ij}\left( Y_{ij}-1\right) \right\rbrace / S_i^2 \right] \right) - 2\log⁡\left( \tilde{\alpha}_j\right),\ &{\rm for}\  1\leq j=k \leq p, \\
	\log⁡\left[ n^{-1} \sum_{i=1}^{n}\left\lbrace \left( Y_{ij}Y_{ik} \right)/S_i^2 \right\rbrace\right]   - \left\lbrace  \log⁡\left( \tilde{\alpha}_j\right)+\log⁡\left( \tilde{\alpha}_k\right)\right\rbrace ,\ &{\rm for}\  1\leq j\neq k \leq p.
	\end{cases}
	\end{equation}
	Under the boundedness condition \ref{conC}, for any positive integer $m$ and $0<\epsilon<3$, there exists a constant $C_0$ depending only on $m$ such that for $1\leq j,k \leq p$,
	\[
	pr\left(\left|\tilde{\sigma}_{jk}-\sigma_{jk}\right|>\epsilon\right)\leq 1/\left( C_0n^m\epsilon^{2m}\right) .
	\]
\end{lemma}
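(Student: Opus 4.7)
My plan is to decompose $\tilde{\sigma}_{jk}-\sigma_{jk}$ into three log-differences and bound each separately. Using the moment identities (\ref{equ2}) and their logarithms we have
\[
\sigma_{jk}=\log b_{jk}-\log\alpha_j-\log\alpha_k,\qquad b_{jk}:=E\!\left(Y_{1j}Y_{1k}/S_1^2\right)
\]
(with $Y_{1j}Y_{1k}$ replaced by $Y_{1j}(Y_{1j}-1)$ when $j=k$; the expectations do not depend on the cell index because the $S_i$ factors cancel). Consequently
\[
\tilde{\sigma}_{jk}-\sigma_{jk}=(\log\hat{b}_{jk}-\log b_{jk})-(\log\tilde{\alpha}_j-\log\alpha_j)-(\log\tilde{\alpha}_k-\log\alpha_k),
\]
where $\hat{b}_{jk}=n^{-1}\sum_i Y_{ij}Y_{ik}/S_i^2$. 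A union bound reduces the lemma to showing that each of these three log-differences exceeds $\epsilon/3$ with probability at most $C/(n^m\epsilon^{2m})$.

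First I would establish polynomial tail bounds for the raw sample averages. Condition \ref{conC} bounds $|\mu_j|,|\sigma_{jk}|,S_i\le C$, which in turn bounds $\alpha_j$ and $b_{jk}$ both above and below by positive constants depending only on $C$. Conditional on $X_{ij}$ (log-normal with bounded parameters), the variables $Y_{ij}/S_i$ and $Y_{ij}Y_{ik}/S_i^2$ are products of Poisson variables whose moments are polynomials in $S_i X_{ij}$; taking the unconditional expectation using the moment generating function of the log-normal yields uniform bounds $E(Y_{ij}/S_i)^{2m},\,E(Y_{ij}Y_{ik}/S_i^2)^{2m}\le M_m$ depending only on $m$ and $C$. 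Rosenthal's inequality applied to the independent centered summands then gives $E|\tilde{\alpha}_j-\alpha_j|^{2m}\le C_m n^{-m}$ and $E|\hat b_{jk}-b_{jk}|^{2m}\le C_m n^{-m}$, and Markov's inequality converts these into $P(|\tilde{\alpha}_j-\alpha_j|>\delta)\le C_m/(n^m\delta^{2m})$ and the analogous bound for $\hat b_{jk}$.

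Next I would transfer these bounds to the logarithmic scale. Since $|\log(1+x)|\le 2|x|$ for $|x|\le 1/2$, on the event $\{|\tilde{\alpha}_j-\alpha_j|\le\alpha_j/2\}$ we have $|\log\tilde{\alpha}_j-\log\alpha_j|\le 2\alpha_j^{-1}|\tilde{\alpha}_j-\alpha_j|$. Therefore, for any $0<\epsilon<3$,
\[
P(|\log\tilde{\alpha}_j-\log\alpha_j|>\epsilon/3)\le P(|\tilde{\alpha}_j-\alpha_j|>\alpha_j\epsilon/6)+P(|\tilde{\alpha}_j-\alpha_j|>\alpha_j/2).
\]
The condition $\epsilon<3$ ensures $\alpha_j\epsilon/6<\alpha_j/2$, so the second probability is dominated by the first; using the concentration bound from the previous step together with the positive lower bound on $\alpha_j$ yields a bound of order $1/(n^m\epsilon^{2m})$. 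The identical argument for $\hat b_{jk}$ (whose mean is also bounded below) handles the remaining log-difference, and combining the three bounds via a union bound completes the proof.

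The main technical obstacle is the $n^{-m}$ rate for the $2m$-th moment of $\tilde{\alpha}_j-\alpha_j$; this forces the use of a Rosenthal-type inequality and requires verifying that all moments of $Y_{ij}/S_i$ and $Y_{ij}Y_{ik}/S_i^2$ are finite and uniformly bounded, which reduces to the finiteness of the moment generating function of the underlying log-normal under Condition \ref{conC}. The restriction $\epsilon<3$ is intrinsic: the logarithm has no uniform Lipschitz constant near zero, and the Taylor bound $|\log(1+x)|\le 2|x|$ only holds for $|x|\le 1/2$, which forces $\epsilon$ into the stated range.
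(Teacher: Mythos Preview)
Your proposal is correct and follows essentially the same route as the paper: bound the moments of $Y_{ij}$ (and hence of $Y_{ij}/S_i$, $Y_{ij}(Y_{ij}-1)/S_i^2$, $Y_{ij}Y_{ik}/S_i^2$) via the Poisson moment formula and the log-normal MGF, apply Rosenthal's inequality together with Markov to obtain polynomial tails $C_m/(n^m\epsilon^{2m})$ for the three sample averages, and then use the Lipschitz behaviour of $\log$ away from zero (the paper uses the mean value theorem on $[\exp(-3C)/2,\infty)$, you use $|\log(1+x)|\le 2|x|$ for $|x|\le 1/2$) to pass to $\tilde\sigma_{jk}$. The only cosmetic difference is that the paper controls all three deviations simultaneously on one high-probability event and then rescales to obtain the final constant, whereas you split $\epsilon$ into thirds and union-bound; both yield the same rate and the same origin of the restriction $\epsilon<3$.
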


\subsubsection*{Two lemmas for lemma \ref{lem:3} and proofs}
Before prove lemma \ref{lem:3}, We need to prove two additional lemmas first.  
\begin{lemma}\label{lem:1}
	Under the boundedness condition \ref{conC}, for any positive integer $m$, there exists $k_m >0$ such that
	\[
	E(Y_{ij}^m) \le k_m.
	\]
	
\end{lemma}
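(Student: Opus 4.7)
The plan is to proceed by conditioning on the latent log-normal variable $X_{ij}$ and using known moment formulas for the Poisson and log-normal distributions, then invoking the boundedness condition to produce a constant that depends only on $m$ (and the universal constant $C$).

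First I would recall that for a Poisson random variable $Y$ with mean $\lambda$, the $m$-th moment can be written as a polynomial in $\lambda$ via Stirling numbers of the second kind,
\[
E(Y^m) = \sum_{k=1}^{m} S(m,k)\,\lambda^k.
\]
Applying this conditionally, since $Y_{ij}\mid X_{ij} \sim \mathrm{Poisson}(S_i X_{ij})$, I get
\[
E(Y_{ij}^m \mid X_{ij}) = \sum_{k=1}^m S(m,k)\,S_i^k\,X_{ij}^k.
\]
Taking the outer expectation and using that $\log X_{ij}\sim \mathrm{N}(\mu_j,\sigma_{jj})$, the log-normal moment formula gives $E(X_{ij}^k)=\exp(k\mu_j + k^2\sigma_{jj}/2)$, so
\[
E(Y_{ij}^m) = \sum_{k=1}^m S(m,k)\,S_i^k\,\exp\!\left(k\mu_j + \tfrac{k^2}{2}\sigma_{jj}\right).
\]

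Second, I would apply Condition \ref{conC}: $S_i\le C$, $|\mu_j|\le C$, $\sigma_{jj}\le C$ uniformly. Substituting these upper bounds term by term yields
\[
E(Y_{ij}^m) \le \sum_{k=1}^m S(m,k)\,C^k\,\exp\!\left(kC+\tfrac{k^2 C}{2}\right) =: k_m,
\]
which depends only on $m$ and the global constant $C$, not on $i$ or $j$. This is precisely the bound claimed.

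There is essentially no hard step here — the argument is a direct computation once one recognizes that the conditional Poisson moments become a finite polynomial in $X_{ij}$ and that the marginal moments of a bounded log-normal are finite of every order. The only detail that requires a bit of care is ensuring that Stirling-number expansion of $E(Y^m)$ is applied correctly (the index starts at $k=1$ because of the $\lambda^0$ term being $S(m,0)=0$ for $m\ge 1$), but this is routine. The resulting constant $k_m$ is exactly the quantity that will be used in the subsequent Chebyshev/Markov-type arguments for Lemma \ref{lem:3}.
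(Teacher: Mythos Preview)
Your proof is correct and follows essentially the same approach as the paper: condition on $X_{ij}$, expand the Poisson moment via Stirling numbers of the second kind, evaluate the resulting log-normal moments, and bound everything using Condition~\ref{conC}. The only cosmetic difference is that the paper writes the sum from $k=0$ rather than $k=1$, which is immaterial since $S(m,0)=0$ for $m\ge 1$.
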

\begin{proof}[Proof of Lemma~\ref{lem:1}]
	Let $C_{k,m}$ be the Stirling numbers of the second kind. From the moment results of Poisson distribution \citep{Riordan1937}, we have
	\[
	E\left(Y_{ij}^m|X_{ij}\right) = \sum_{k=0}^m \left( S_i X_{ij} \right) ^k C_{k,m}\leq \sum_{k=0}^m  C^k X_{ij}^k C_{k,m}.
	\]
	From the moment generating function (MGF) of the normal distribution, we have
	\[
	E\left( X_{ij}^m \right)= {\rm exp}\left( \frac{1}{2}m^2\sigma_{jj}+m\mu_j\right)\leq {\rm exp}\left( \frac{1}{2}m^2C+mC\right).
	\]
	Combining the above two inequalities, we have
	\begin{equation}\label{lem1}
	\begin{aligned}
	E\left(Y_{ij}^m\right)&=E\left(E\left(Y_{ij}^{m}|X_{ij}\right)\right)\leq E\left(\sum_{k=0}^m  C^k X_{ij}^k C_{k,m}\right)\\
	&=\sum_{k=0}^m C^kE\left( X_{ij}^k \right)  C_{k,m} \leq \sum_{k=0}^m C^k {\rm exp}\left( \frac{1}{2}k^2C+kC\right) C_{k,m}.
	\end{aligned}
	\end{equation}
	Let $k_m=\sum_{k=0}^m C^k {\rm exp}\left( \frac{1}{2}k^2C+kC\right) C_{k,m}$ and then the inequality (\ref{lem1}) leads to Lemma \ref{lem:1}.
\end{proof}

\begin{lemma}\label{lem:2}
	Let $\{W_i, 1\le i \le n\}$ be a series of independent random variables with $E\left(W_i\right)=0$ and $E\left(W_i^{k}\right)\leq u_{k}$ for all $1\leq i\leq n,1\leq k\leq 2m$ where $m$ is a positive integer. Then, there exists a constant $v_m$ only depending on $m$, such that
	$$pr\left(\left|n^{-1}\sum_{i=1}^{n}W_i\right|>\epsilon\right)\leq v_{m}/\left( n^m\epsilon^{2m}\right) .$$
\end{lemma}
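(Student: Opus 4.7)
The plan is to apply Markov's inequality to the $2m$-th power of the centered sum and then control $E[(\sum_{i=1}^n W_i)^{2m}]$ by a careful multinomial expansion. Concretely, since $2m$ is even,
\begin{equation*}
\Pr\left(\left|n^{-1}\sum_{i=1}^n W_i\right| > \epsilon\right) \le \frac{E\left[\left(\sum_{i=1}^n W_i\right)^{2m}\right]}{(n\epsilon)^{2m}},
\end{equation*}
so it suffices to prove $E[(\sum_{i=1}^n W_i)^{2m}] \le c_m\, n^m$ for a constant $c_m$ depending only on $m$. Given this, the desired bound follows by setting $v_m = c_m$.

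Next I would expand the $2m$-th power via the multinomial theorem, grouping terms according to the partition of the index multiset. For each tuple $(i_1,\dots,i_{2m}) \in \{1,\dots,n\}^{2m}$, independence factorizes $E[W_{i_1}\cdots W_{i_{2m}}] = \prod_{\text{distinct } j} E[W_j^{k_j}]$, where $k_j$ is the multiplicity of index $j$. The key observation is that whenever some $k_j = 1$, the factor $E[W_j] = 0$ annihilates the term. So only tuples in which every distinct index appears at least twice contribute. Since the multiplicities sum to $2m$ and each is at least $2$, the number $\ell$ of distinct indices is at most $m$.

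Then I would count: the number of tuples with exactly $\ell$ distinct indices is at most $\binom{n}{\ell} \cdot C(\ell,m) \le n^\ell \cdot C(\ell,m) \le n^m \cdot C(\ell,m)$, where $C(\ell,m)$ is a combinatorial constant counting the number of ways to assign an ordered $2m$-tuple of positions to $\ell$ labeled slots with each slot getting $\ge 2$ entries (depending only on $m$). Each surviving summand is bounded in absolute value by $\prod_j E[|W_j|^{k_j}] \le \prod_j u_{k_j} \le \max_{k\le 2m} u_k^m$, using Lyapunov's inequality if needed to handle odd moments (an odd moment $|E[W_j^k]|$ is bounded by $E[W_j^{2m}]^{k/(2m)} \le u_{2m}^{k/(2m)}$, still controlled by a constant depending only on $m$ and the $u_k$'s). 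Summing over the at most $m$ choices of $\ell$ gives $E[(\sum_i W_i)^{2m}] \le c_m\, n^m$.

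The only mildly delicate point is the bookkeeping that shows every contributing tuple contributes at most $O(n^m)$ (rather than $O(n^{2m})$), which is precisely what is purchased by the vanishing of first-moment terms, and is the mechanism that turns the naive Chebyshev-type rate into the sharper $n^{-m}$ rate. Once that combinatorial count is in place, assembling $v_m$ from $c_m$ is immediate and completes the proof.
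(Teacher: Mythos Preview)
Your argument is correct. You and the paper both start identically with Markov's inequality applied to the $2m$-th power, reducing the problem to showing $E\bigl[(\sum_{i=1}^n W_i)^{2m}\bigr] \le c_m\, n^m$. At that point the paper simply invokes Rosenthal's inequality,
\[
E\Bigl(\sum_{i=1}^{n}W_i\Bigr)^{2m}\le k_m\,\max\Bigl\{\sum_{i=1}^{n}E\,W_i^{2m},\ \Bigl(\sum_{i=1}^{n}E\,W_i^2\Bigr)^m\Bigr\}\le k_m\,\max\{u_{2m},u_2^m\}\,n^m,
\]
and reads off $v_m=k_m\max\{u_{2m},u_2^m\}$. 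You instead reprove this moment bound from scratch via the multinomial expansion, using the zero-mean condition to kill all tuples with a singleton index and then counting that the surviving tuples have at most $m$ distinct indices. This is exactly the combinatorial mechanism underlying Rosenthal-type inequalities, so your route is more elementary and self-contained, at the cost of some bookkeeping; the paper's route is a one-line citation but presupposes Rosenthal. Your handling of the odd-moment factors via $|E\,W_j^{k}|\le E\,|W_j|^{k}\le (E\,W_j^{2m})^{k/(2m)}\le u_{2m}^{k/(2m)}$ is the right patch, and in fact the paper's application of Rosenthal only ever touches the even moments $E\,W_i^2$ and $E\,W_i^{2m}$, so the two arguments end up using the same moment information. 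One cosmetic point: in both proofs the final constant $v_m$ depends on $m$ and on the fixed bounds $u_k$; the phrase ``only depending on $m$'' in the lemma should be read as ``not depending on $n$ or $\epsilon$.''
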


\begin{proof}[Proof of Lemma~\ref{lem:2}]
	From Chebyshev inequality, we have
	\begin{equation}\label{i3}
	\begin{aligned}
	pr\left( \left|\frac{1}{n}\sum_{i=1}^{n}W_i\right|>\epsilon\right) &=pr\left(\left(\sum_{i=1}^{n}W_i\right)^{2m}>n^{2m}\epsilon^{2m}\right)
	&\leq E\left(\sum_{i=1}^{n}W_i\right)^{2m}/\left( n^{2m}\epsilon^{2m}\right) .
	\end{aligned}
	\end{equation}
	Combining the inequality (\ref{i3}) and the Rosenthal inequality (\ref{i4}) \citep{Rosenthal1970},
	\begin{equation}\label{i4}
	\begin{aligned}
	E\left(\sum_{i=1}^{n}W_i\right)^{2m}\leq k_m {\rm max}\left[ \sum_{i=1}^{n}E\left(W_i\right)^{2m},\left\lbrace E\left( \sum_{i=1}^{n}W_i^2\right) \right\rbrace ^m\right],
	\end{aligned}
	\end{equation}
	while $k_m$ is a constant that only depends on $m$, we have
	\begin{equation}\label{i5}
	\begin{aligned}
	pr\left(\left|\frac{1}{n}\sum_{i=1}^{n}W_i\right|>\epsilon\right)&\leq  k_m {\rm max}\left[\sum_{i=1}^{n}E\left(W_i\right)^{2m},\left\lbrace E\left( \sum_{i=1}^{n}W_i^2\right)\right\rbrace ^m \right] /\left( n^{2m}\epsilon^{2m}\right) \\
	&\leq k_m {\rm max}\left\{nu_{2m},\left(nu_2\right)^m\right\}/\left( n^{2m}\epsilon^{2m}\right)\\  
	&\leq k_m {\rm max}\left\{u_{2m},u_2^m\right\}/\left( n^{m}\epsilon^{2m}\right) .
	\end{aligned}
	\end{equation}
	Let $v_m=k_m {\rm max}\left\{u_{2m},u_2^m\right\}$ in the inequality (\ref{i5}) then Lemma \ref{lem:2} follows.
\end{proof}
\subsubsection*{Proof of lemma \ref{lem:3}}
\begin{proof}
	For any $1\leq j,k\leq p$, notice that 
	\[
	\alpha_j={\rm exp}\left( \mu_j+\sigma_{jj}/2\right)=E\left( Y_{ij}/S_i\right), \tilde{\alpha}_j=n^{-1} \sum_{i=1}^n Y_{ij}/S_i,
	\]
	and let 
	\[
	u_j=\alpha_j^2{\rm exp}\left( \sigma_{jj}\right)=E\left(\left( Y_{ij}^2-Y_{ij}\right) /S_i^2\right) , \tilde{u}_j=n^{-1}\sum_{i=1}^{n}\left( \left( Y_{ij}^2-Y_{ij}\right) / S_i^2 \right) , 
	\]
	\[
	w_{jk}=\alpha_j\alpha_{k}{\rm exp}\left( \sigma_{jk}\right)=E\left(Y_{ij}Y_{ik}/S_i^2\right), \tilde{w}_{jk}=n^{-1}\sum_{i=1}^{n}\left( Y_{ij}Y_{ik}/S_i^2\right).
	\]
	
	Because $\left[ Y_{ij}\right] _{1\leq i\leq n},\left[ \left( Y_{ij}^2-Y_{ij}\right) /S_i^2\right] _{1\leq i\leq n},\left[ Y_{ij}Y_{ik}/S_i^2\right] _{1\leq i\leq n}$ are three sets of independent variables, all of which have finite $m$th moments for any positive integer $m$ by Lemma \ref{lem:1}. Then, by Lemma \ref{lem:2}, we have
	\[
	pr\left(\left|\tilde{\alpha}_j-\alpha_j\right|>\epsilon\right)\leq \frac{v_{1m}}{n^m\epsilon^{2m}},
	pr\left(\left|\tilde{u}_j-u_j\right|>\epsilon\right)\leq \frac{v_{2m}}{n^m\epsilon^{2m}},
	pr\left(\left|\tilde{w}_{jk}-w_{jk}\right|>\epsilon\right)\leq \frac{v_{3m}}{n^m\epsilon^{2m}}.
	\]
	
	Now we can derive the convergence rate of $\tilde{\sigma}_{jk}$. Using the boundedness condition (\ref{conC}), the parameters $\alpha_j,u_j,w_{jk}$ are all in the interval $\left[ {\rm exp}\left(-3C \right) ,{\rm exp}\left(4C \right)\right] $. Then, for any $\epsilon<{\rm exp}\left(-3C \right)/2$, we have
	\begin{equation}\label{lemma3:probineq}
	pr\left( \max\left\lbrace \left|\tilde{\alpha}_j-\alpha_j\right|,~ \left|\tilde{u}_j-u_j\right|,~\left|\tilde{w}_{ij'}-w_{ij'}\right|\right\rbrace \leq\epsilon \right) > 1-\left( v_{1m}+v_{2m}+v_{3m}\right) /\left( n^m\epsilon^{2m}\right).~
	\end{equation}
	Then with at least probability $1-\left( v_{1m}+v_{2m}+v_{3m}\right) /\left( n^m\epsilon^{2m}\right) $, 
	\begin{equation}\label{lemma3:a}
	\max\left\lbrace \left|\tilde{\alpha}_j-\alpha_j\right|,~ \left|\tilde{u}_j-u_j\right|,~\left|\tilde{w}_{ij'}-w_{ij'}\right|\right\rbrace \leq\epsilon,
	\end{equation}
	according to $\alpha_j,u_j,w_{jk}\geq{\rm exp}\left(-3C \right)$ and $\epsilon<{\rm exp}\left(-3C \right)/2$ , we can derive from (\ref{lemma3:a}) that
	\begin{equation}\label{lemma3:b}
	\min\left\lbrace \tilde{\alpha}_j,\tilde{u}_j,\tilde{w}_{jk}\right\rbrace >{\rm exp}\left(-3C \right)/2. 
	\end{equation}
	For any $j\neq k$,
	\begin{equation}\label{lemma3:u}
	\sigma_{jj}=\log u_j-2\log\alpha_j,~
	\sigma_{jk}=\log w_{jk}-\log\alpha_j-\log\alpha_{k}. 
	\end{equation}
	From the Lagrange's mean value theorem, we have, for any $ x,y\geq{\rm exp}\left(-3C \right)/2$,
	\begin{equation}\label{lemma3:c}
	\left| \log x-\log y\right|=\left|x-y\right|/\xi \leq2\left|x-y\right|/{\rm exp}\left(-3C \right),
	\end{equation}
	while $\xi$ is a number between $x,y$. 
	Then combining (\ref{lemma3:a}), (\ref{lemma3:b}) and (\ref{lemma3:c}), we have 
	\[
	\max \left\lbrace \left| \log\tilde{\alpha}_j-\log\alpha_j\right|,\left| \log\tilde{u}_j-\log u_j\right|,\left| \log\tilde{w}_{jk}-\log w_{jk}\right| \right\rbrace  \leq 2{\rm exp}\left(3C \right)\epsilon,~
	\]
	and thus $\left|\tilde{\sigma}_{jk}-\sigma_{jk}\right|\leq 6{\rm exp}\left(3C \right)\epsilon$ using (\ref{lemma3:u}).~Then from the probability inequality \eqref{lemma3:probineq}, for any $\epsilon<{\rm exp}\left(-3C \right)/2$, we have
	\[
	pr\left( \left|\tilde{\sigma}_{jk}-\sigma_{jk}\right|\leq 6{\rm exp}\left(3C \right)\epsilon \right)> 1-\frac{v_{1m}+v_{2m}+v_{3m}}{n^m\epsilon^{2m}}.~
	\]
	So, for any $\eta=6{\rm exp}\left(3C \right)\epsilon<3$ and $C_0= \left\lbrace 6{\rm exp}\left(3C \right)\right\rbrace ^{-2m}\left(v_{1m}+v_{2m}+v_{3m}\right)^{-1}$, we have
	\[
	pr\left(\left|\tilde{\sigma}_{jk}-\sigma_{jk}\right|>\eta\right)\leq \left\lbrace 6{\rm exp}\left(3C \right)\right\rbrace ^{2m}\frac{v_{1m}+v_{2m}+v_{3m}}{n^m\eta^{2m}} = 1/\left( C_0 n^m\eta^{2m}\right) .~
	\]
	Then we finish the proof of lemma \ref{lem:3}
\end{proof}
\subsubsection{Proofs of theorems}
\subsubsection*{Proof of Theorem \ref{thm:rate}}
According to Lemma \ref{lem:3}, for any $0<\eta<3$,  $pr\left(\left|\tilde{\sigma}_{jk}-\sigma_{jk}\right|>\eta\right)\leq 1/\left( C_0n^m\eta^{2m}\right) $, we have
\[
pr\left(\left\|\tilde{\Sigma}-\Sigma\right\|_{\infty}>\epsilon\right)\leq p^2/\left( C_0n^m\epsilon^{2m}\right) .~
\]
Then, from
\[
\check{\Sigma}=\argmin_{A\succeq 0}\left\|A-\tilde{\Sigma}\right\|_{\infty}\Rightarrow \left\|\check{\Sigma}-\tilde{\Sigma}\right\|_{\infty}\leq\left\|\Sigma -\tilde{\Sigma}\right\|_{\infty},
\]
we have
\[
\left\|\hat{\Sigma}-\Sigma\right\|_{\infty}\leq\left\|\hat{\Sigma}-\check{\Sigma}\right\|_{\infty}+\left\|\check{\Sigma}-\tilde{\Sigma}\right\|_{\infty}+\left\|\tilde{\Sigma}-\Sigma\right\|_{\infty}\]\[
=2\left\|\check{\Sigma}-\tilde{\Sigma}\right\|_{\infty}+\left\|\tilde{\Sigma}-\Sigma\right\|_{\infty} \leq 3\left\|\tilde{\Sigma}-\Sigma\right\|_{\infty},~
\]
and 
\[
\left\|\check{\Sigma}-\Sigma\right\|_{\infty}\leq 2\left\|\tilde{\Sigma}-\Sigma\right\|_{\infty}.~
\]
Let $C_1=3^{-2m}C_0$, $C_2=2^{-2m}C_0$ and for any $0 < \epsilon  < 6$, we have~
\[
pr\left(\left\|\hat{\Sigma}-\Sigma\right\|_{\infty}>\epsilon \right) \le pr\left(\left\|\tilde{\Sigma}-\Sigma\right\|_{\infty}>\epsilon/3 \right) \leq p^2/\left( C_1 n^m \epsilon^{2m}\right) .~
\]
\[
pr\left(\left\|\check{\Sigma}-\Sigma\right\|_{\infty}>\epsilon \right) \le pr\left(\left\|\tilde{\Sigma}-\Sigma\right\|_{\infty}>\epsilon/2 \right) \leq p^2/\left( C_2 n^m \epsilon^{2m}\right) .~
\]
\vspace*{-10pt}

\subsubsection*{Proof of Theorem \ref{thm:recovery}, \ref{thm:sign} and \ref{thm:both}}
We define
$$\breve{\Theta}=\argmin_{A=A^T}\frac{1}{2}{\rm tr}\left(\hat{\Sigma}A^2\right)-{\rm tr}\left(A\right)+\lambda\left\|A\right\|_{1,\text{off}}. $$
Let 
\begin{eqnarray*}
	\epsilon=&1/{\rm max}\Bigg[12dk_{\Gamma},~12\gamma^{-1}(k_{\Sigma}k_{\Gamma}^2+k_{\Gamma}),~\left\lbrace 12\gamma^{-1}\left(k_{\Sigma}k_{\Gamma}^3+k_{\Gamma}^2\right)+5dk_{\Gamma}^2\right\rbrace \theta_{\min}^{-1},\\
	&{\rm min}\left\lbrace s^{1/2},d+1\right\rbrace \left\lbrace 12\gamma^{-1}\left(k_{\Sigma}k_{\Gamma}^3+k_{\Gamma}^2\right)+5dk_{\Gamma}^2\right\rbrace \lambda_{\min}^{-1}(\Theta),1/5\Bigg]
\end{eqnarray*}
For $\eta>2$, let $n_f=C_1^{-1/m}p^{\eta/m}\epsilon^{-2}$ and $\epsilon_f=C_1^{-1/\left( 2m\right) }p^{\eta/\left( 2m\right) }n^{-1/2}$. According to $n>n_f$, we have $$\epsilon_f=C_1^{-1/\left( 2m\right) }p^{\eta/\left( 2m\right)}n^{-1/2}<C_1^{-1/\left( 2m\right)}p^{\eta/\left( 2m\right)}n_f^{-1/2}=\epsilon<6,$$
while $C_1$ is the constant in Theorem \ref{thm:rate}. Then, from Theorem \ref{thm:rate}, we have
\[
pr\left(\|\hat{\Sigma} - \Sigma \|_{\infty}
>\epsilon_f\right)<p^2/\left( C_1 n^m \epsilon_f^{2m}\right) =p^{2-\eta},
\]
and thus with a probability at least $1-p^{2-\eta}$, $$\|\hat{\Sigma} - \Sigma \|_{\infty}\leq\epsilon_f< \epsilon\leq 1/{\rm max}\left\{12dk_{\Gamma},12\gamma^{-1}\left(k_{\Sigma}k_{\Gamma}^2+k_{\Gamma}\right)\right\}.$$
According to $\lambda=12\gamma^{-1}\left(k_{\Sigma}k_{\Gamma}^2+k_{\Gamma}\right)\epsilon_f$,
we can get that
\begin{equation}\label{ref10}
\begin{aligned}
\|\hat{\Sigma} - \Sigma \|_{\infty}&<1/\left( 12dk_{\Gamma}\right) ,\\
6\|\hat{\Sigma} - \Sigma \|_{\infty}\left(k_{\Sigma}k_{\Gamma}^2+k_{\Gamma}\right)&\leq 0.5\gamma {\rm min}\left\{\lambda,1\right\}.
\end{aligned}
\end{equation}
Using Lemma \ref{dtrace} (a) with (\ref{ref10}), $\breve{\Theta}$ recovers all zeros in $\Theta$. That is
$$\text{vec}\left(\breve{\Theta}\right)_{G^c}=0.$$
\noindent
Using Lemma \ref{dtrace} (b) and according to the fact that $$\lambda=12\gamma^{-1}\left(k_{\Sigma}k_{\Gamma}^2+k_{\Gamma}\right)\epsilon_f< 12\gamma^{-1}\left(k_{\Sigma}k_{\Gamma}^2+k_{\Gamma}\right)\epsilon\leq 1$$ and $\|\hat{\Sigma} - \Sigma \|_{\infty}\leq\epsilon_f$, we have
\begin{equation}\label{8}
\begin{aligned}
\left\|\breve{\Theta}-\Theta\right\|_{\infty}&<\lambda k_{\Gamma}+\frac{5}{2}d\left(1+\lambda\right)\|\hat{\Sigma} - \Sigma \|_{\infty} k_{\Gamma}^2\\
&\leq \left\lbrace 12\gamma^{-1}\left(k_{\Sigma}k_{\Gamma}^3+k_{\Gamma}^2\right)+5dk_{\Gamma}^2\right\rbrace \epsilon_f.~
\end{aligned}
\end{equation}
Then we consider the $s$ nonzeros in $\Theta$ and $\text{vec}\left(\breve{\Theta}\right)_{G^c}=0$, we can easily get
\begin{equation}\label{9}
\begin{aligned}
\left\|\breve{\Theta}-\Theta\right\|_F&\leq s^{1/2}\left\|\breve{\Theta}-\Theta\right\|_{\infty}\\
&<s^{1/2}\left\lbrace 12\gamma^{-1}\left(k_{\Sigma}k_{\Gamma}^3+k_{\Gamma}^2\right)+5dk_{\Gamma}^2\right\rbrace \epsilon_f.\\
\end{aligned}
\end{equation}
Using $\|A\|_2\leq\|A\|_F$ and $\|A\|_2\succeq 0$ while $\left| A_{jj}\right| \geq \sum_{k\neq j}\left| A_{jk}\right|$ for all $1\leq j\leq p$
\begin{equation}\label{10}
\begin{aligned}
\left\|\breve{\Theta}-\Theta\right\|_2&\leq {\rm min}\left\{s^{1/2},d+1\right\}\left\|\breve{\Theta}-\Theta\right\|_{\infty}\\
&<{\rm min}\left\{s^{1/2},d+1\right\}\left\lbrace 12\gamma^{-1}\left(k_{\Sigma}k_{\Gamma}^3+k_{\Gamma}^2\right)+5dk_{\Gamma}^2\right\rbrace \epsilon_f.\\
\end{aligned}
\end{equation}
From (\ref{8}) and combining $\epsilon_f<\epsilon\leq \theta_{\min}/\left(12\gamma^{-1}\left(k_{\Sigma}k_{\Gamma}^3+k_{\Gamma}^2\right)+5dk_{\Gamma}^2\right)$, we have
$$\left\|\breve{\Theta}-\Theta\right\|_{\infty}<\theta_{\min},$$
which means that $\breve{\Theta}$ also recovers the nonzeros in $\Theta$.

Finally, we check $\hat{\Theta}=\breve{\Theta}$ to finish the proof. We just need to verify $\lambda_{\min}\left(\breve{\Theta}\right)>0$, that can be obtained from $\left\|\breve{\Theta}-\Theta\right\|_2<\lambda_{\min}\left(\Theta\right)$. So using (\ref{10}) and combining $$\epsilon_f<\epsilon\leq \lambda_{\min}\left(\Theta\right)/\left[ {\rm min}\left\{s^{1/2},d+1\right\}\left\lbrace 12\gamma^{-1}\left(k_{\Sigma}k_{\Gamma}^3+k_{\Gamma}^2\right)+5dk_{\Gamma}^2\right\rbrace \right] ,$$ we get the conclusion.

Above all, $\hat{\Theta}$ recovers all zeros and nonzeros in $\Theta$ and meet all the convergence rates for $\breve{\Theta}$ in (\ref{8}), (\ref{9}), (\ref{10}) with a probability at least $1-p^{2-\eta}$, then we finish the proof of Theorem \ref{thm:recovery} and \ref{thm:sign}. 

Replace $\hat{\Theta},C_1$ with $\check{\Theta},C_2$, then the proof of the Theorem \ref{thm:both} is the same as the proof above.

\subsection{Additional results of simulation and real data analysis}
\begin{table}[h]
	\centering
	\caption{Comparisons of PLNet with VPLN and glasso in terms of the area under Receiver Operating Characteristic curve (AUC) on simulation results. The results are averages over 100 replicates with standard deviations in brackets}
	\includegraphics[scale=0.8]{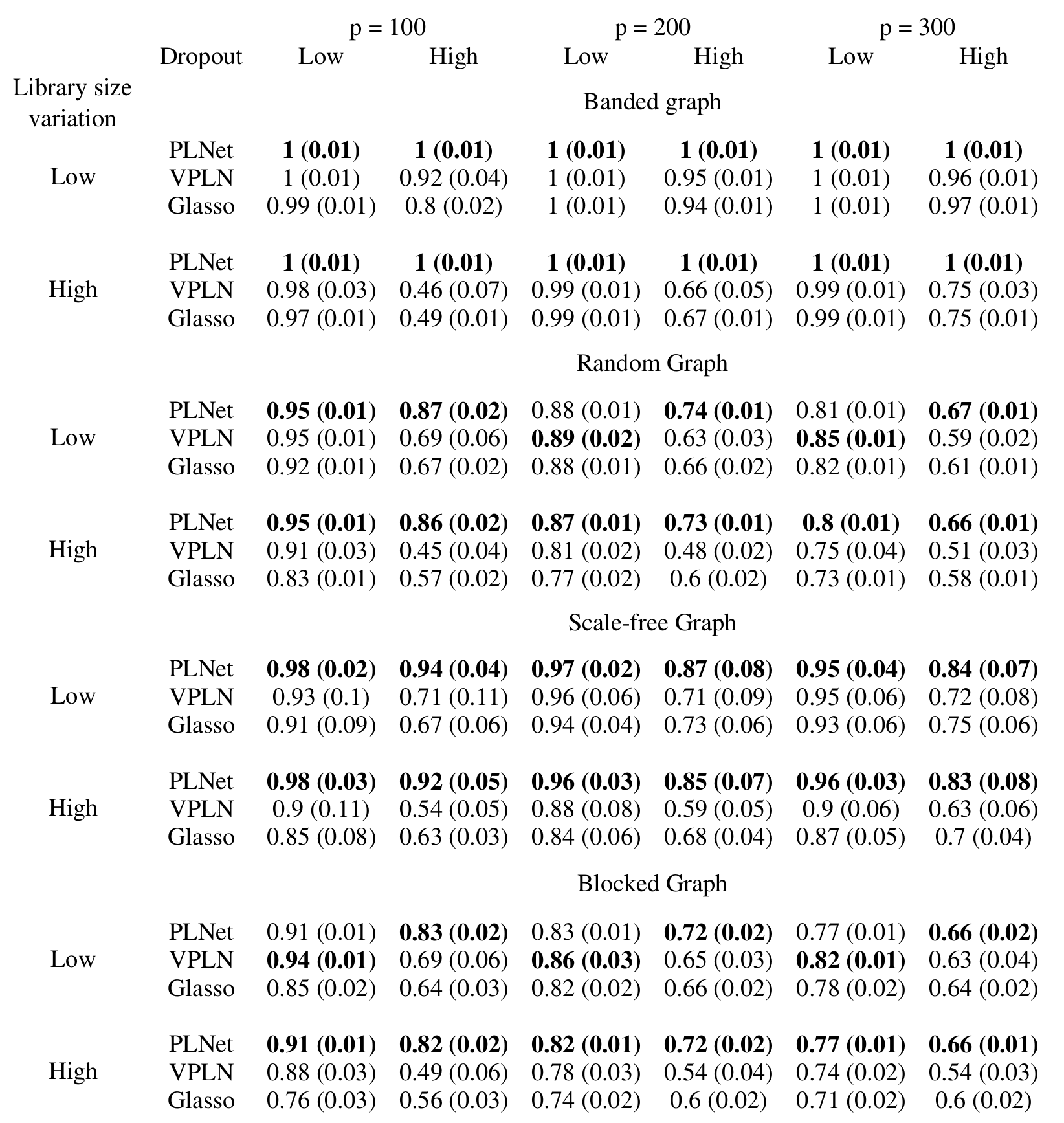} \label{4}
\end{table}

\begin{table}[htb]
	\centering
	\caption{Comparisons of PLNet with VPLN and glasso in terms of the true positive rate (TPR), the true discovery rate (TDR) and the Frobenius risk for band graph. The tuning parameters of three methods are selected by BIC criterion. The results are averages over 100 replicates with standard deviations in brackets}
	\includegraphics[scale=0.9]{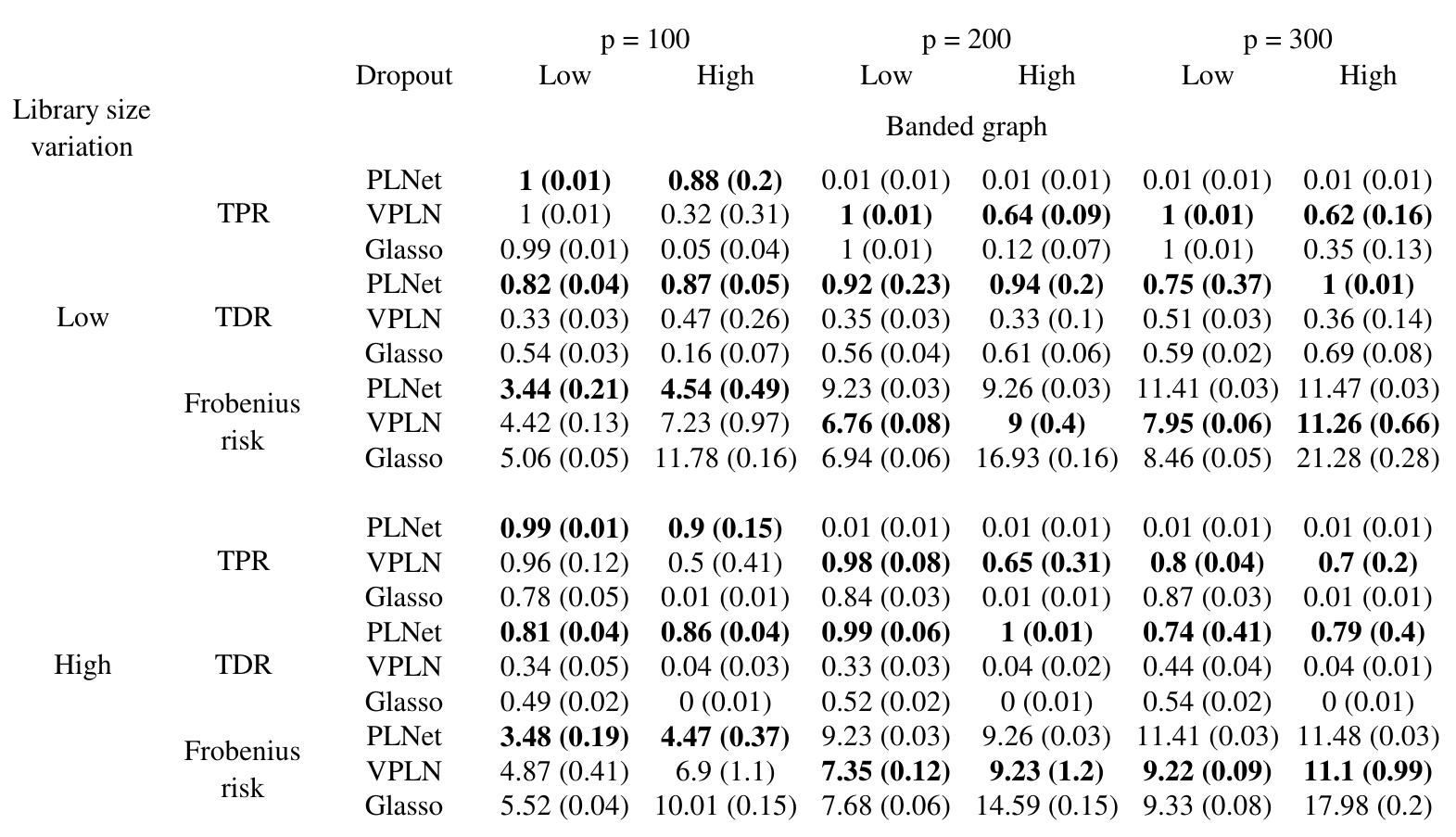} \label{5}
\end{table}

\begin{table}[htb]
	\centering
	\caption{Comparisons of PLNet with VPLN and glasso in terms of the true positive rate (TPR), the true discovery rate (TDR) and the Frobenius risk for Scale-free graph. The tuning parameters of three methods are selected by BIC criterion. The results are averages over 100 replicates with standard deviations in brackets}
	\includegraphics[scale=0.9]{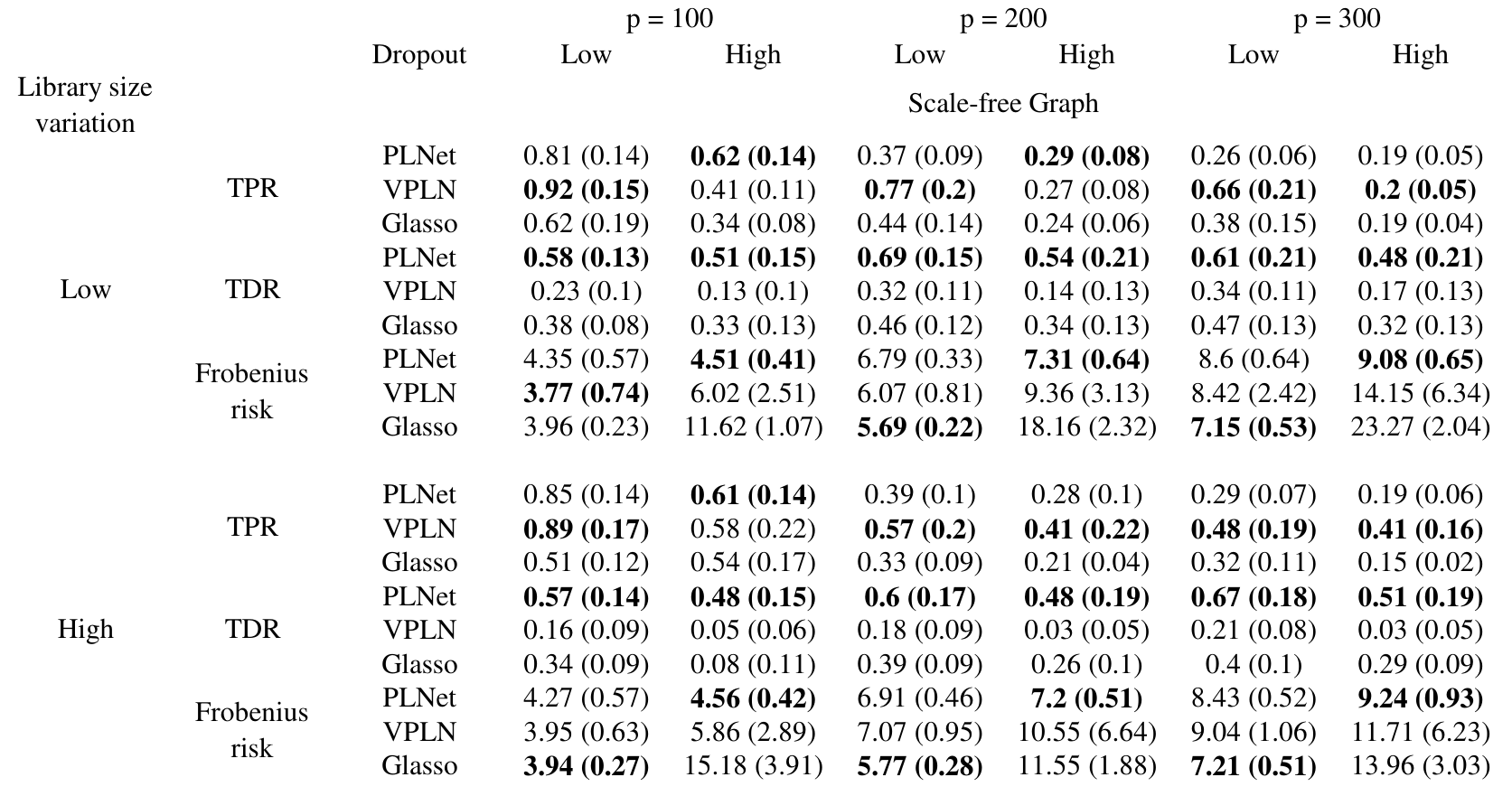} \label{6}
\end{table}

\begin{table}[htb]
	\centering
	\caption{Comparisons of PLNet with VPLN and glasso in terms of the true positive rate (TPR), the true discovery rate (TDR) and the Frobenius risk for block graph. The tuning parameters of three methods are selected by BIC criterion. The results are averages over 100 replicates with standard deviations in brackets}
	\includegraphics[scale=0.9]{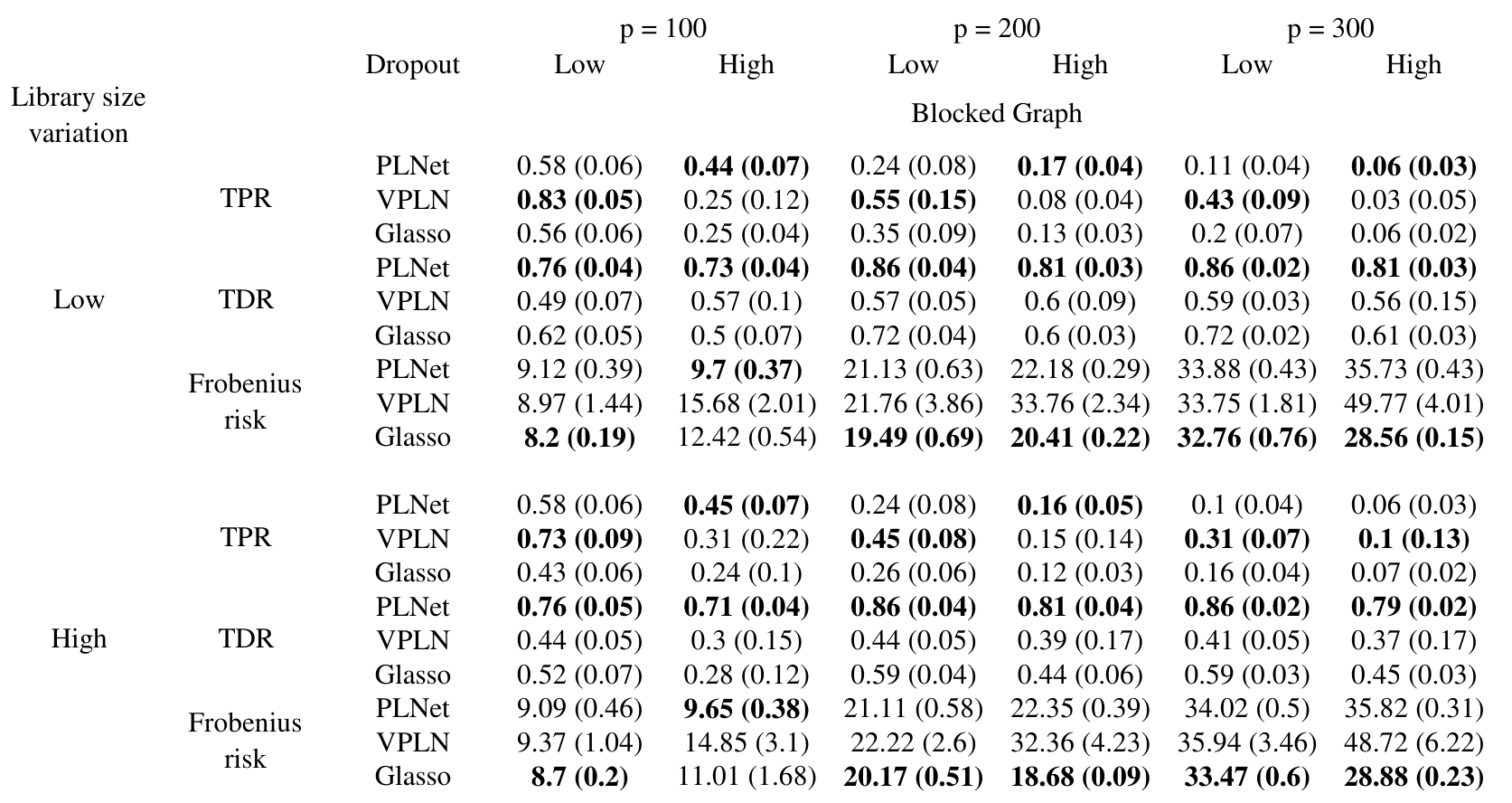} \label{7}
\end{table}

\begin{figure}[htb]
	\centering
	\subfigure[PLNet]{
		\begin{minipage}[b]{0.5\linewidth}
			\includegraphics[width=2.8in]{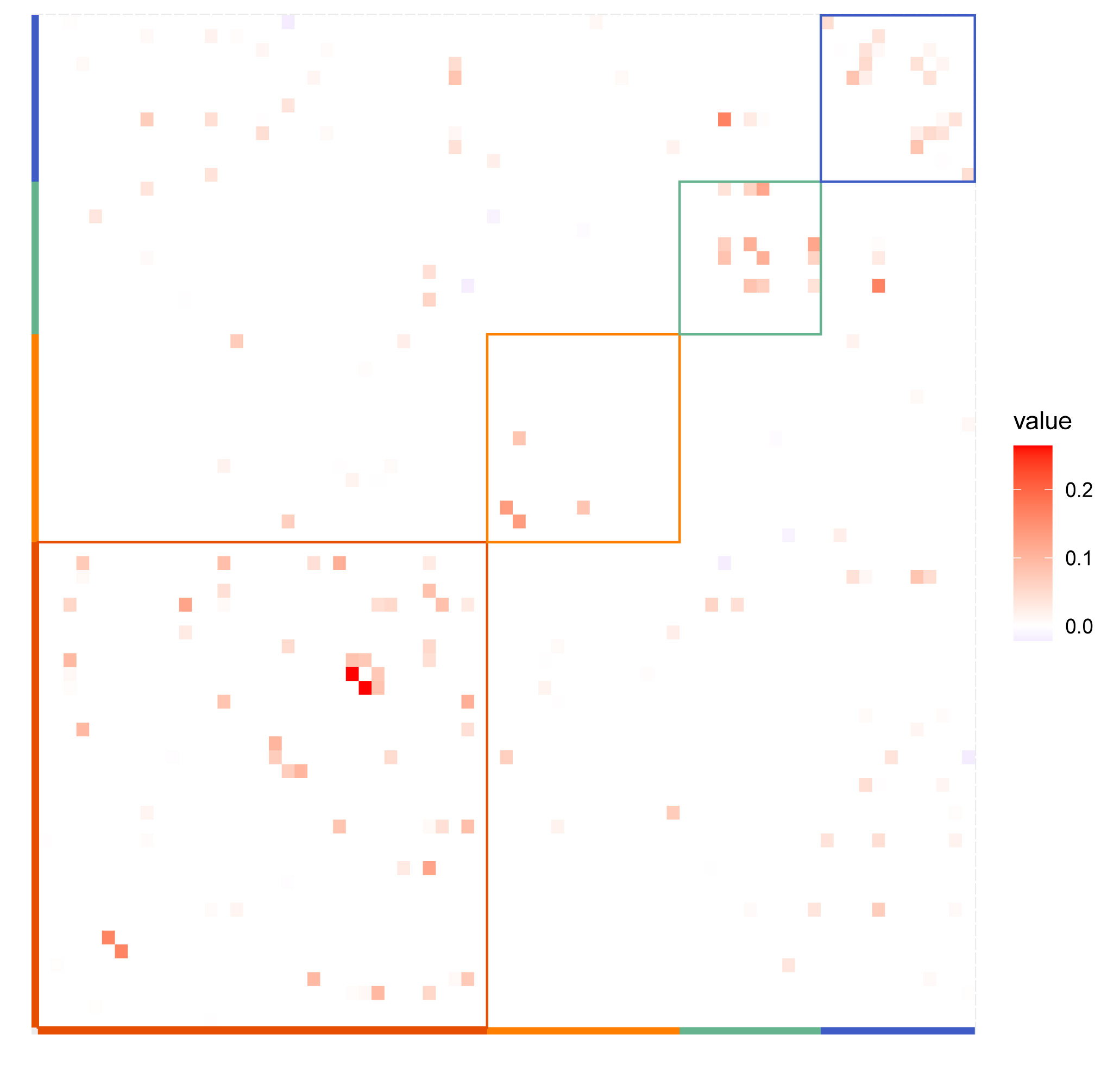}
		\end{minipage}%
	}%
	\subfigure[VPLN]{
		\begin{minipage}[b]{0.5\linewidth}
			\includegraphics[width=2.8in]{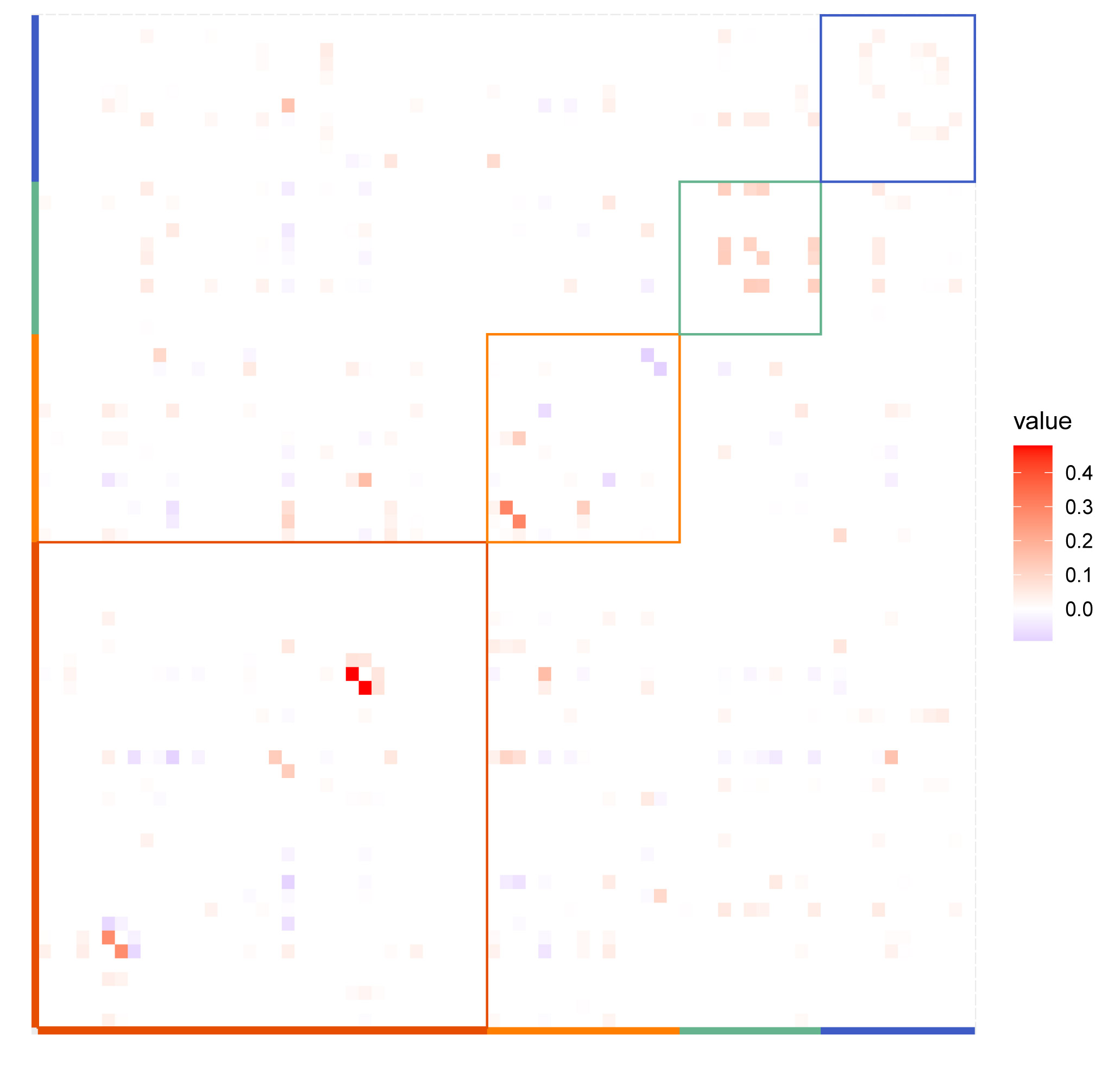}
		\end{minipage}%
	}%
	\caption{Heat maps of partial correlations between genes in the 4 GO modules given by PLNet (a) and VPLN (b) tuned such that the network densities are around 3\%. Red: cytokine-mediated signaling pathway (Module $M_1$); Yellow: neutrophil mediated immunity (Module $M_2$); Green: cellular protein metabolic process (Module $M_3$); Blue: proteolysis (Module $M_4$)}
	\label{fig:heatmap11}
\end{figure}
\begin{figure}[htb]
	\centering
	\subfigure[PLNet]{
		\begin{minipage}[t]{0.5\linewidth}
			\centering
			\includegraphics[width=2.8in]{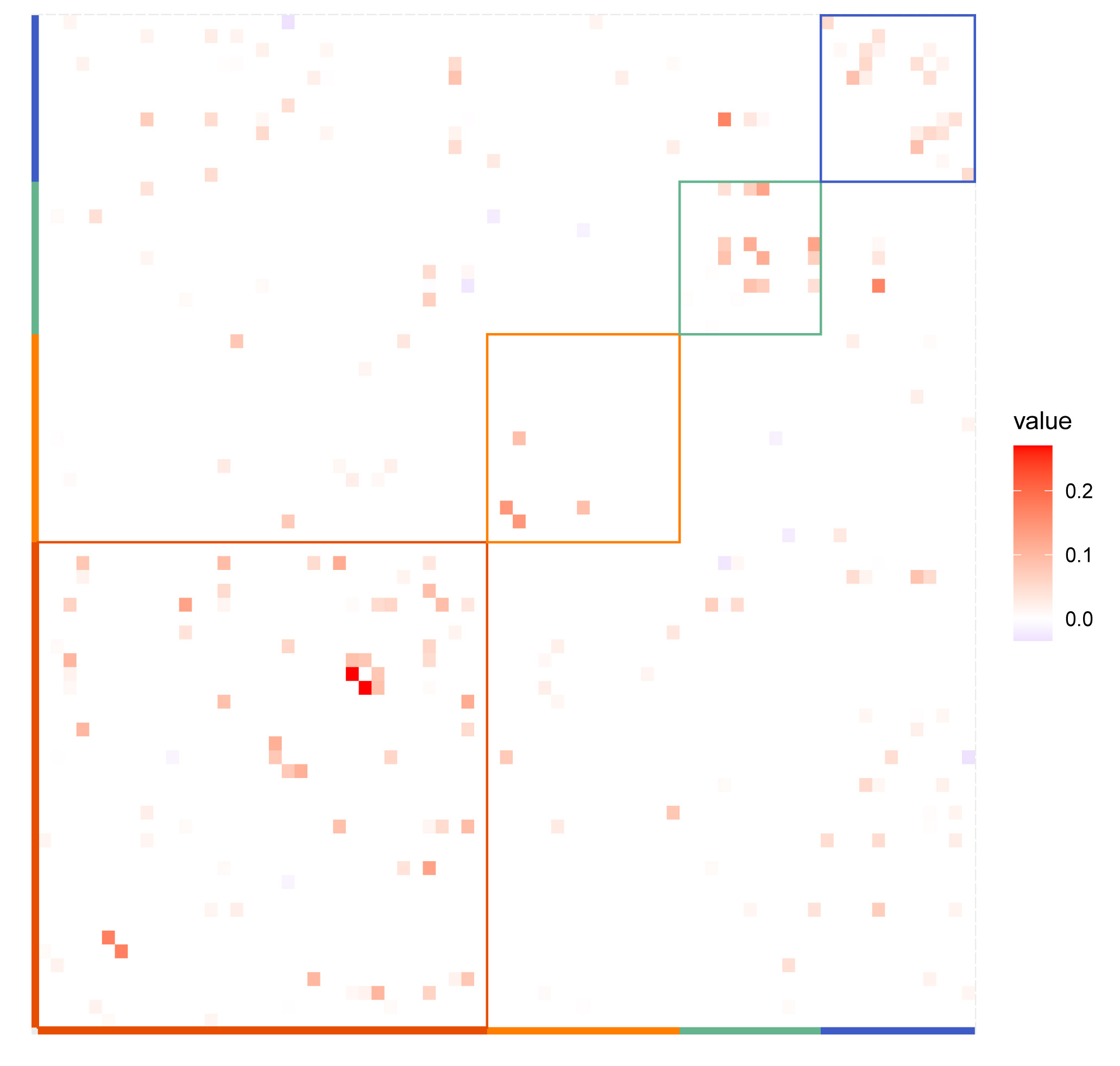}
		\end{minipage}%
	}%
	\subfigure[VPLN]{
		\begin{minipage}[t]{0.5\linewidth}
			\centering
			\includegraphics[width=2.8in]{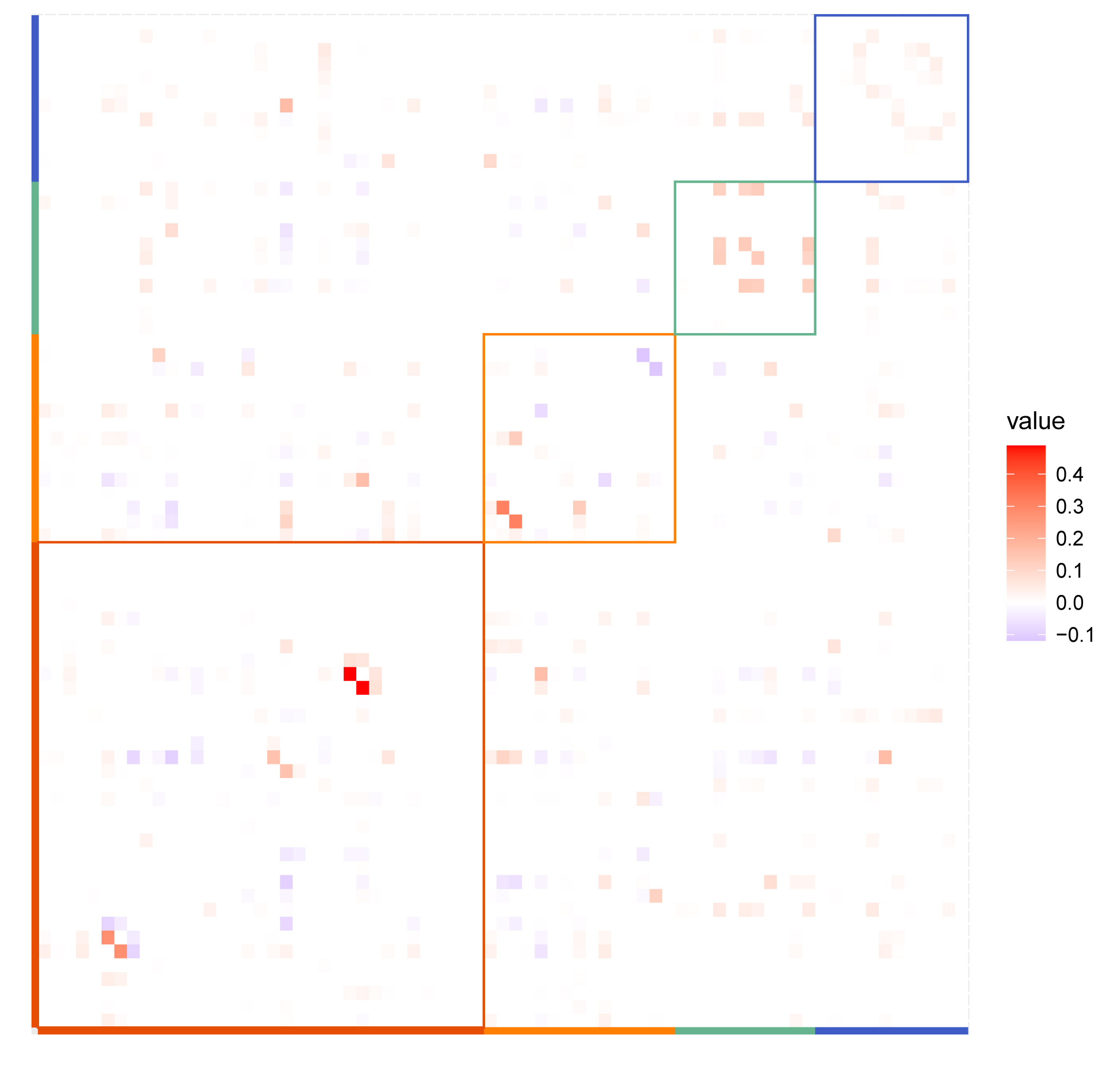}
		\end{minipage}%
	}%
	\caption{Heat maps of partial correlations between genes in the 4 GO modules given by PLNet (a) and VPLN (b) tuned such that the network densities are around 4\%. Red: cytokine-mediated signaling pathway (Module $M_1$); Yellow: neutrophil mediated immunity (Module $M_2$); Green: cellular protein metabolic process (Module $M_3$); Blue: proteolysis (Module $M_4$)}
	\label{fig:heatmap2}
\end{figure}
\begin{figure}[htb]
	\centering
	\subfigure[PLNet]{
		\begin{minipage}[t]{0.5\linewidth}
			\centering
			\includegraphics[width=2.8in]{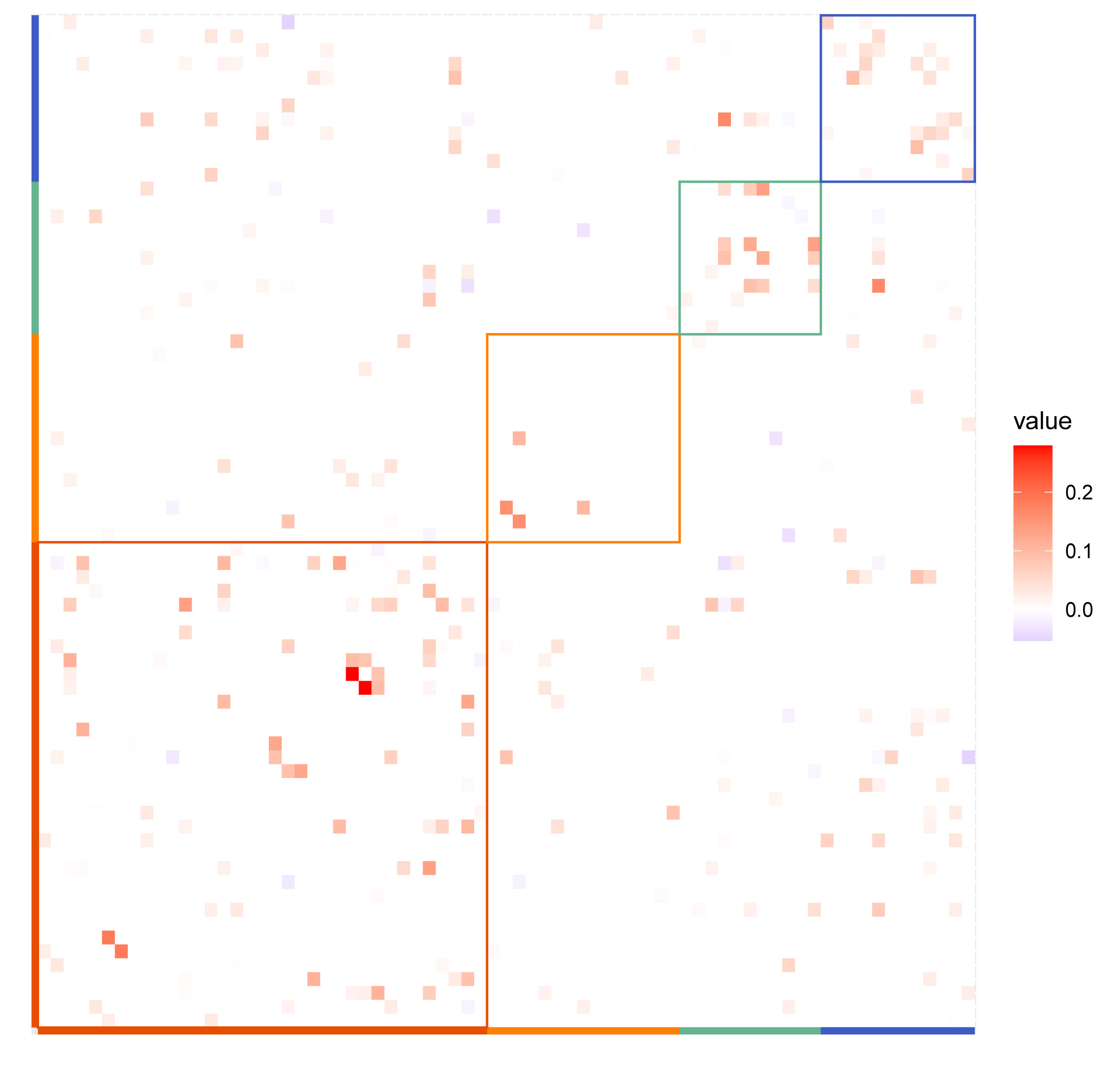}
		\end{minipage}%
	}%
	\subfigure[VPLN]{
		\begin{minipage}[t]{0.5\linewidth}
			\centering
			\includegraphics[width=2.8in]{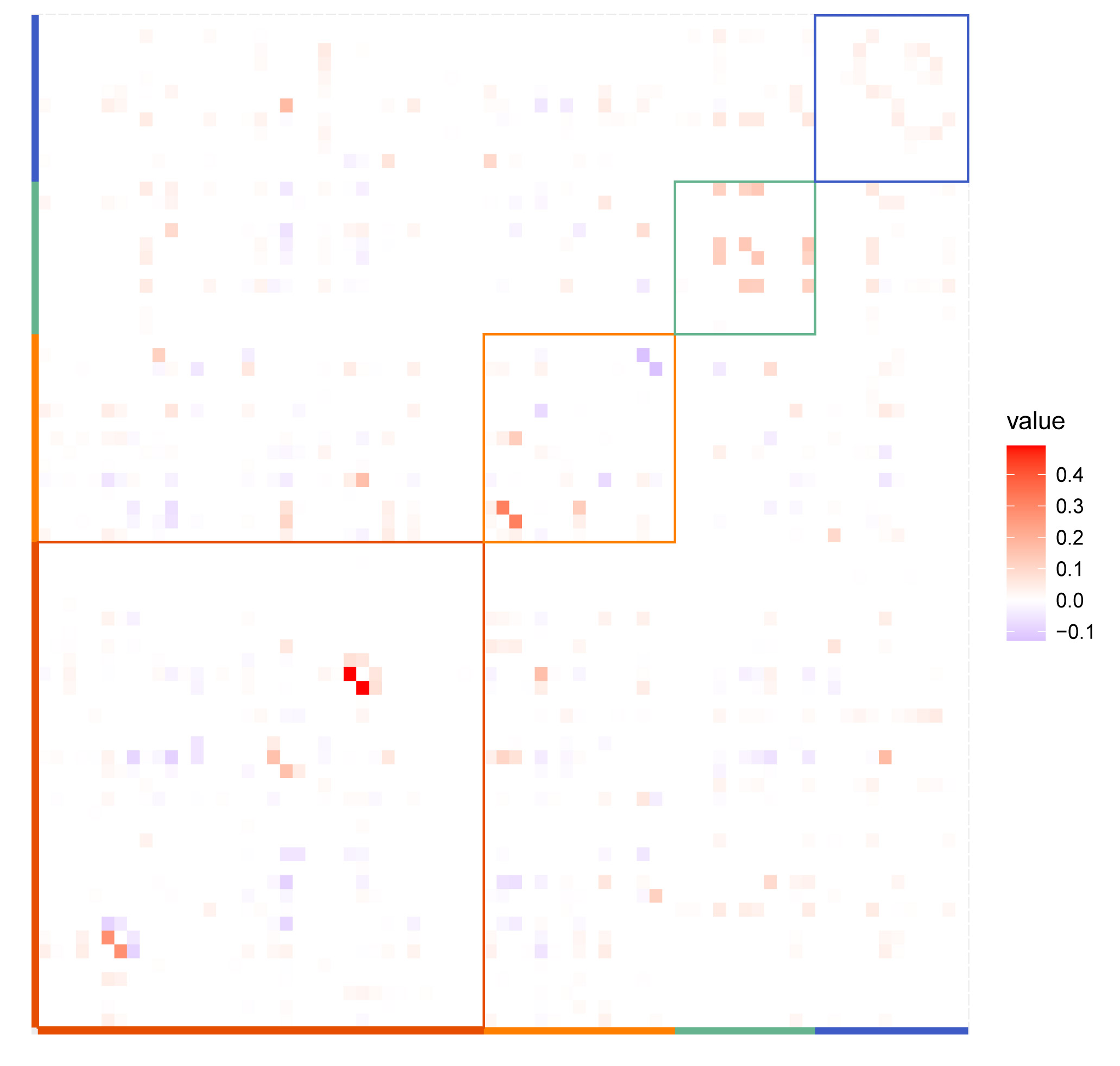}
		\end{minipage}%
	}%
	\caption{Heat maps of partial correlations between genes in the 4 GO modules given by PLNet (a) and VPLN (b) tuned such that the network densities are around 6\%. Red: cytokine-mediated signaling pathway (Module $M_1$); Yellow: neutrophil mediated immunity (Module $M_2$); Green: cellular protein metabolic process (Module $M_3$); Blue: proteolysis (Module $M_4$)}
	\label{fig:heatmap3}
\end{figure}
\begin{figure}[htb]
	\centering
	\subfigure[PLNet]{
		\begin{minipage}[t]{0.5\linewidth}
			\centering
			\includegraphics[width=2.8in]{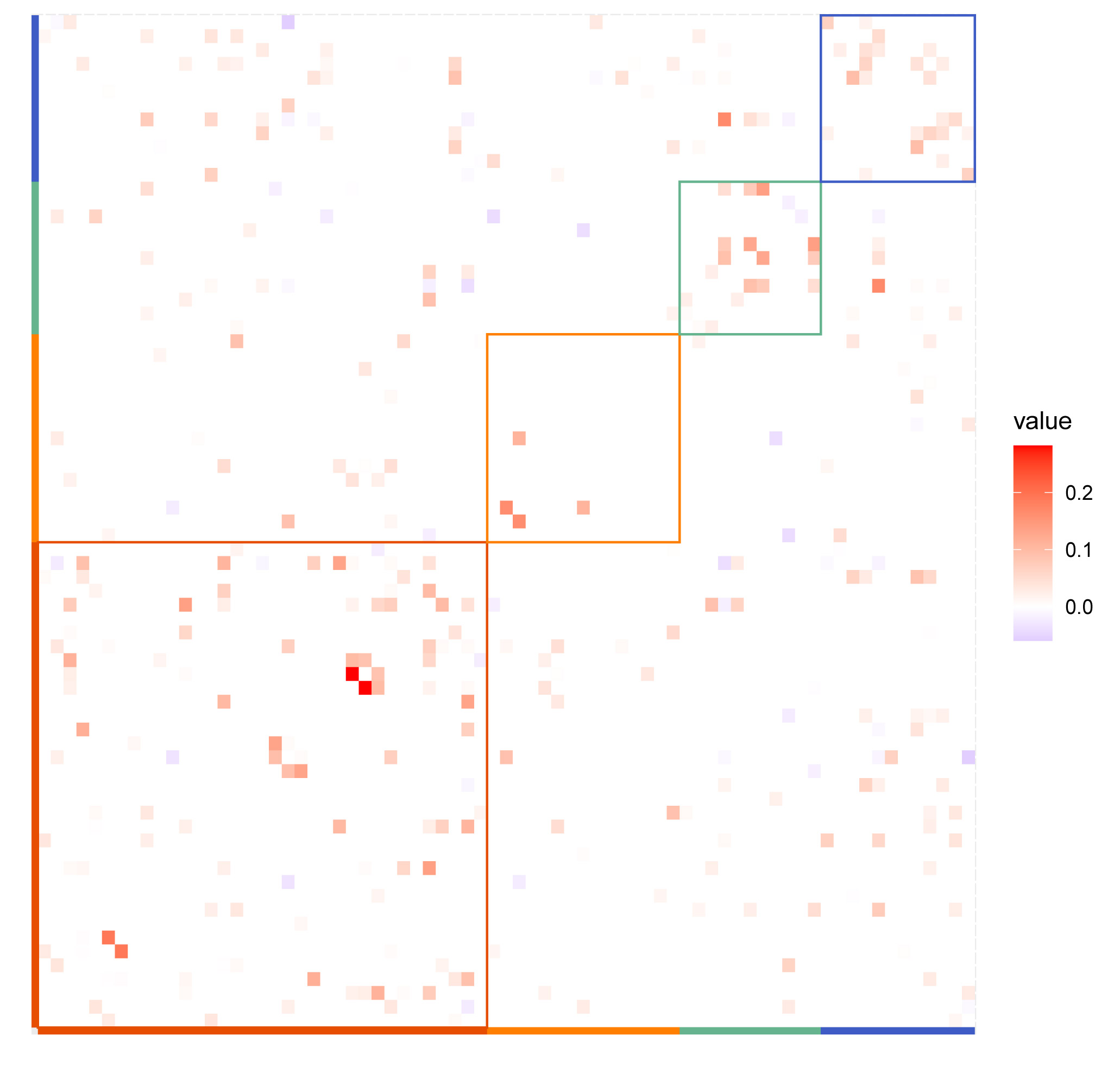}
		\end{minipage}%
	}%
	\subfigure[VPLN]{
		\begin{minipage}[t]{0.5\linewidth}
			\centering
			\includegraphics[width=2.8in]{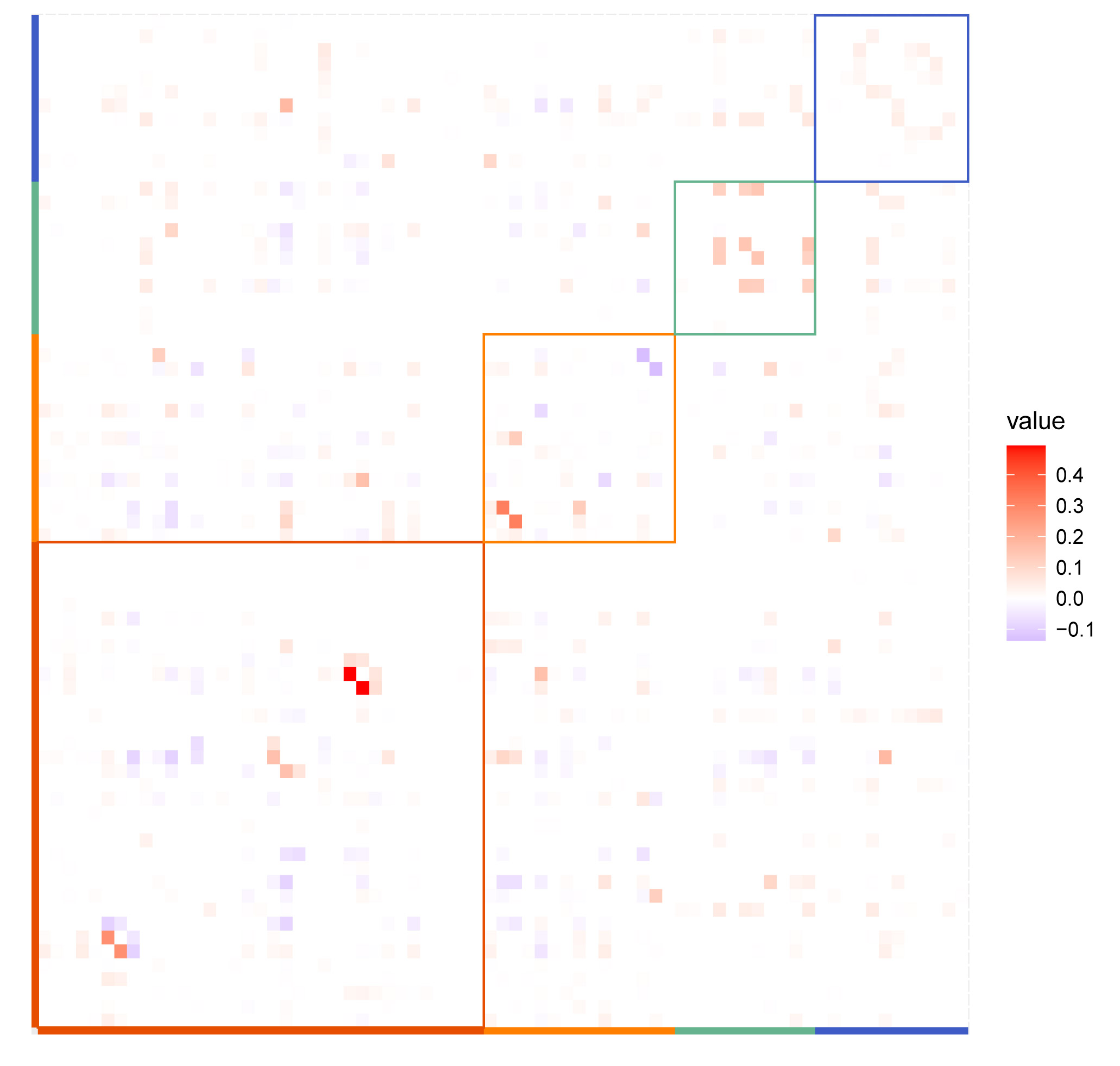}
		\end{minipage}%
	}%
	\caption{Heat maps of partial correlations between genes in the 4 GO modules given by PLNet (a) and VPLN (b) tuned such that the network densities are around 7\%. Red: cytokine-mediated signaling pathway (Module $M_1$); Yellow: neutrophil mediated immunity (Module $M_2$); Green: cellular protein metabolic process (Module $M_3$); Blue: proteolysis (Module $M_4$)}
	\label{fig:heatmap4}
\end{figure}
\begin{figure}[htb]
	\centering
	\subfigure[PLNet]{
		\begin{minipage}[t]{0.5\linewidth}
			\centering
			\includegraphics[width=2.8in]{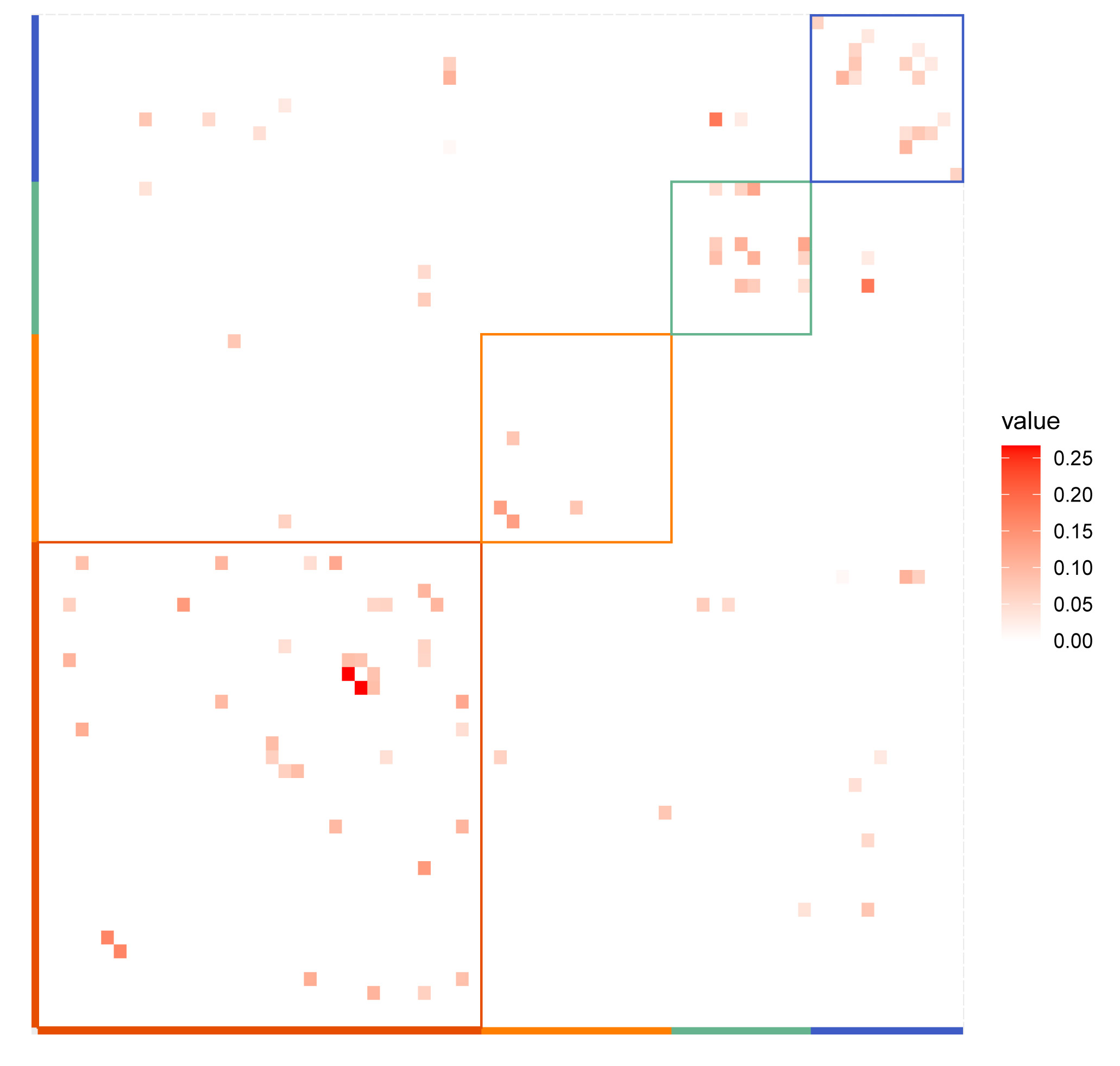}
		\end{minipage}%
	}%
	\subfigure[VPLN]{
		\begin{minipage}[t]{0.5\linewidth}
			\centering
			\includegraphics[width=2.8in]{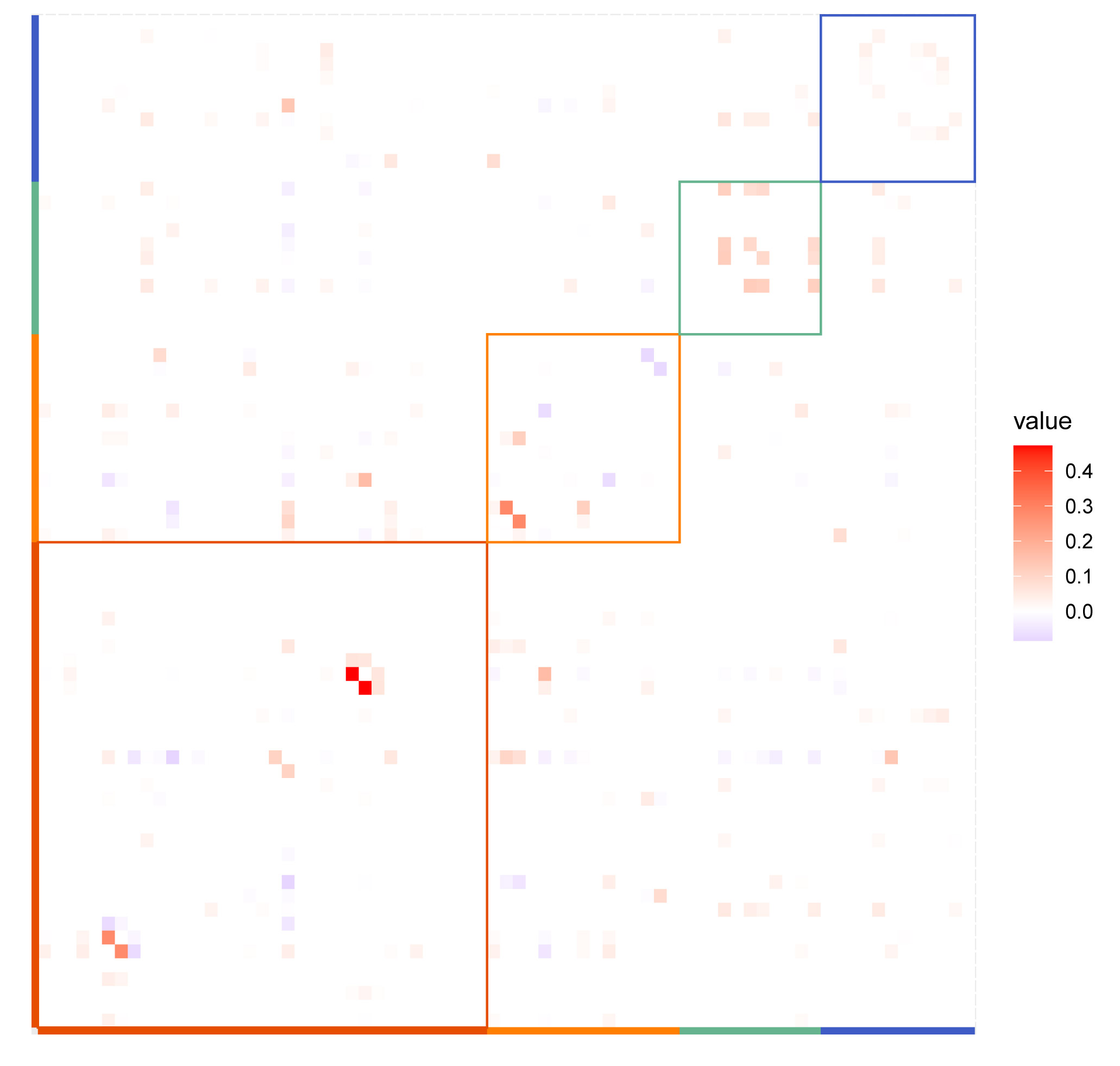}
		\end{minipage}%
	}%
	\caption{Heat maps of partial correlations between genes in the 4 GO modules given by PLNet (a) and VPLN (b) tuned by BIC. Red: cytokine-mediated signaling pathway (Module $M_1$); Yellow: neutrophil mediated immunity (Module $M_2$); Green: cellular protein metabolic process (Module $M_3$); Blue: proteolysis (Module $M_4$)}
	\label{fig:heatmap5}
\end{figure}
\begin{table}[htb]
	\centering
	\caption{The within-between connection ratios of the 4 major modules in the networks estimated by PLNet and VPLN tuned such that the network densities are around 3-7\% and by BIC}
	\includegraphics[scale=1.1]{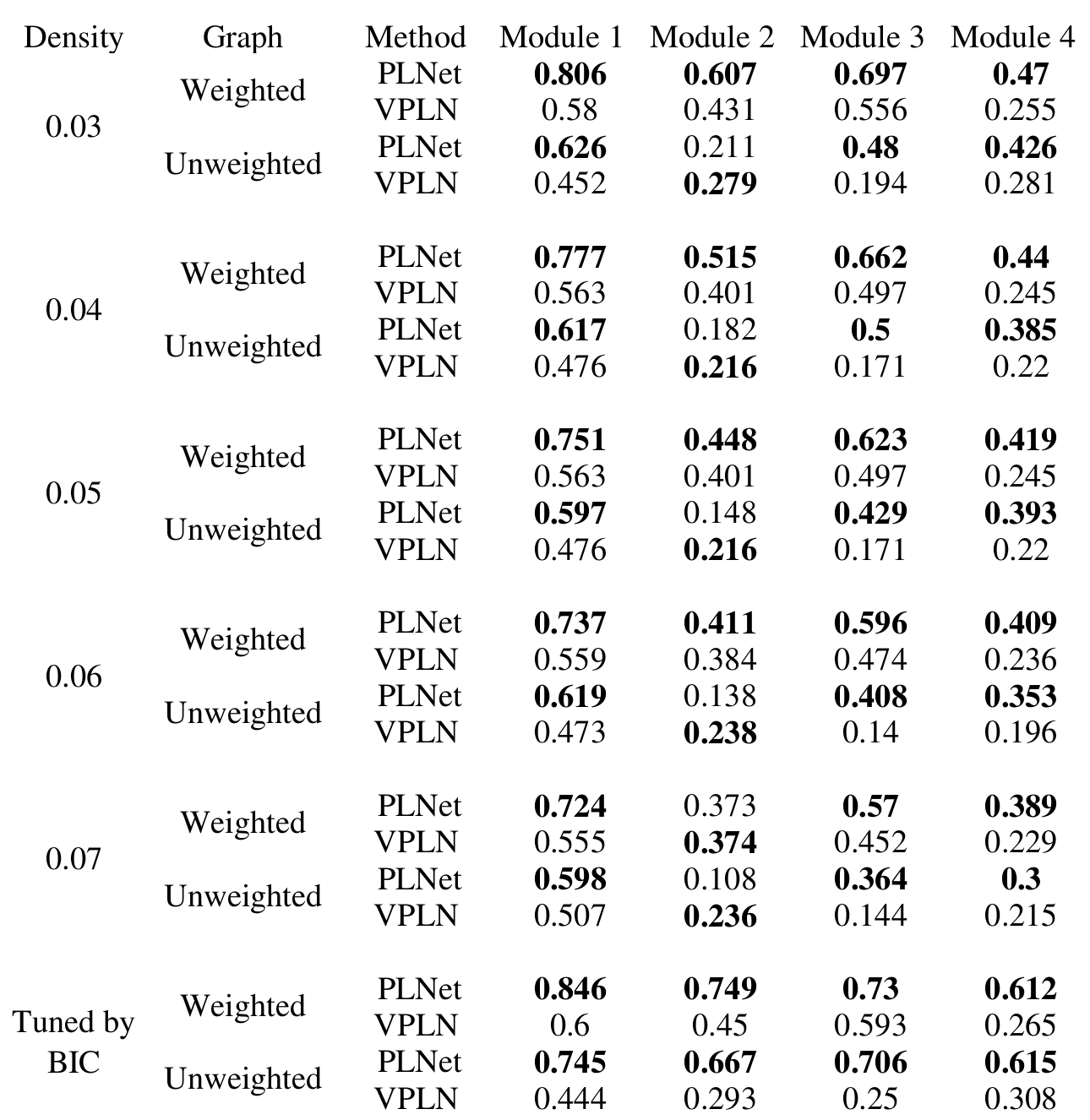}
	
	\label{tab5}
\end{table}	

\newpage
\setstretch{1.24}
\bibliographystyle{jasa}
\bibliography{PLND}
	
\end{document}